\providecommand{\U}[1]{\protect\rule{.1in}{.1in}}
\newtheorem{theorem}{Theorem}
\newtheorem{definition}[theorem]{Definition}
\newtheorem{lemma}[theorem]{Lemma}
\newtheorem{proposition}{Proposition}
\newenvironment{proof}[1][Proof]{\noindent \textbf{#1.} }{\  \rule{0.5em}{0.5em}}
\begin{document}

\title{A Dynamic Analysis of Nash Equilibria in Search Models with Fiat
Money\thanks{Correspondence to: maurizio.iacopetta@sciencespo.fr. We are
grateful to\ conference and seminar participants at the 2017 Summer Workshop
on Money, Banking, Payments, and Finance at the Bank of Canada, the IV AMMCS
International Conference, Waterloo, Canada, Luiss University (Rome),
University of G\"{o}ttingen, and the School of Mathematics, Georgia Tech, for
useful comments. All remaining errors are ours.}}
\author{Federico Bonetto, School of Mathematics, Georgia Tech
\and Maurizio Iacopetta, OFCE (Sciences-Po) and Skema Business School}
\date{}
\maketitle

\begin{abstract}
We study the rise in the acceptability fiat money in\ a Kiyotaki-Wright
economy by developing a method that can determine dynamic Nash equilibria for
a class of search models with genuine heterogenous agents. We also address
open issues regarding the stability properties of pure strategies equilibria
and the presence of multiple equilibria. Experiments illustrate the liquidity
conditions that favor the transition from partial to full acceptance of fiat
money, and the effects of inflationary shocks on production, liquidity, and trade.

\textbf{Keywords}: Acceptability of Money, Perron, Search. \newline\textbf{JEL
codes}: C61, C62, D83, E41.

\end{abstract}

\section{Introduction}

One central question of monetary economics is how an object that does not
bring utility \textit{per se} is accepted as a mean of payment. It is well
understood that the emergence of money depends on trust and coordination of
believes. While some recognize this observation and simply assume that money
is part of the economic system, others have tried to explain the acceptance of
money as the result of individuals' interactions in trade and production
activities.\footnote{Economic textbooks sometimes alert readers that the
acceptance of fiat money cannot be taken for granted. For instance, Mankiw
(2006, p. 644-45) observes that in the late 1980s, when the Soviet Union was
breaking up, in Moscow some preferred cigarettes to rubles as means of
payment.} Among the best-known attempts to formalize the emergence of money in
decentralized exchanges is Kiyotaki and Wright (1989) (henceforth, KW). The
static analysis in KW provides important insights on how specialization in
production, the technology of matching, and the cost of holding commodities,
condition the emergence of monetary equilibria. Nevertheless, it leaves
important issues open. First, one would like to know if and how convergence to
a particular long run equilibrium occurs from an arbitrary initial state of
the economy. Historical accounts describe different patterns that societies
followed in adopting objects as means of payment.\footnote{For a classic
review of the rise of early means of payments see Quiggin (1949). For the
institutional and historical conditions that favored the dissemination of fiat
money see Goetzmann (2016).} What are the dynamic conditions that lead
individuals in a KW economy to accept commodity or fiat money? Second, static
analysis gives little guidance about the short run consequences of a shock
that causes, for instance, a sudden rise of inflation. How does the degree of
acceptability of commodity and fiat money change with inflation?

The determination of dynamic equilibria in a KW environment is challenging. In
an effort to improve its tractability, new classes of monetary search models
have been proposed. These have incorporated some features of centralized
exchanges but have also eliminated others, most notably the genuine
heterogeneity across individuals and goods, and the storability of goods (see
Lagos et al. (2017), for a recent review). Restoring these features turns out
to be a useful exercise for characterizing the rise of money as a dynamic phenomenon.

The study of money acceptance in a KW environment requires a departure from
the conventional set of tools employed to characterize the dynamics of an
economy with centralized markets. Our method combines Nash's (1950) definition
of equilibrium with Perron's iterative approach to prove the stable manifold
theorem (see, among others, Robinson, 1995). This is the first work, to our
knowledge, that shows how to determine pure strategies dynamic Nash equilibria
in a KW\ search environment with fiat money. Previous works on the subject
considered economies without fiat money, and often assumed bounded
rationality.\footnote{See, for instance, the works of Marimon et al. (1990)
and Ba\c{s}\c{c}\i\ (1999) with intelligent agents, and of Brown (1996) and
Duffy and Ochs (1999, 2002) with controlled laboratory experiments. Matsuyama
et al. (1993), Wright (1995), Luo (1999) and Sethi (1999) use evolutionary
dynamics. Kehoe, Kiyotaki, and Wright (1993) show that under mixed strategies
equilibria could generate cycles, sunspots, and other non-Markovian
equilibria. Renero (1998), however, proves that it is impossible to find an
initial condition from which an equilibrium pattern converges to a mixed
strategy steady state equilibrium. Oberfield and Trachter (2012) find that, in
a symmetric environment, as the frequency of search increases, cycles and
multiplicity in mixed strategy tend to disappear. More recently, Iacopetta
(2018) studies dynamic Nash equilibria in a KW environment with no fiat
money.}

Steady state results echo those of inventory-theoretic models of money (e.g.,
Baumol 1952, Tobin 1956, and Jovanovic 1982): for instance, higher levels of
seignorage may induce some to keep commodities in the inventory instead of
accepting money, as a way to minimize the odds of being hit by a seignorage
tax. The dynamic analysis, however, generates novel results: it shows how
changes in the liquidity of assets other than money can alter the proportion
of individuals who accept fiat money in transactions. For instance, it reveals
that an economy that converges to a long run equilibrium in which all prefer
fiat money to all types of commodities (full acceptance), may go through a
phase in which only a fraction of individuals do so (partial
acceptance).\footnote{Shevchenko and Wright (2004) also study partial
acceptability in pure strategies. They focus, however, on steady state
analysis.}

The remainder of the paper is organized as follows. Section 2 describes the
economic environment, characterizes the evolution of the distribution of
inventories and money, and defines a Nash equilibrium. Section 3 overviews
steady state Nash equilibria for some specifications of the model. Section 4
presents a methodology to determine Nash equilibria. Section 5 illustrates the
acceptability of money and discusses multiple steady states through numerical
experiments. Section 6 contains welfare considerations. Section 7 has few
remarks about future research. Appendix A contains proofs and mathematical
details omitted in the main text. Appendix B explains how the stable manifold
theorem is related to our solution algorithm.

\section{The Model}

This section describes the economic environment, characterizes the evolution
of the distribution of inventories and money, and defines a Nash equilibrium.

\subsection{The Environment}

The model economy is a generalization of that described in KW. There are four
main differences. First, to facilitate the analysis of the dynamics, time is
continuous. Second, the model is extended to deal with seignorage, following
the approach devised by Li (1994, 1995): government agents randomly confiscate
money holders of their balances and use the proceedings to purchase
commodities. Third, as in Wright (1995) agents are not necessarily equally
divided among the three types. Fourth, as in Lagos et al. (2017), we obtain
the type of equilibria that emerge in the Model B of KW by reshuffling the
ordering of the storage costs across the three types of goods rather than
altering the patterns of specialization in production.

The economy is populated by three types of infinitely lived agents; there are
$N_{i}$ individuals of type $i$, with $i=1,2,3$, where $N_{i}$ is a very large
number. The total size of the population is $N=N_{1}+N_{2}+N_{3}$ and the
fraction of each type is denoted with $\theta_{i}=\frac{N_{i}}{N}$. A type $i$
agent consumes only good $i$ and can produce only good $i+1$ (modulo 3).
Production occurs immediately after consumption. Agent $i$'s instantaneous
utility from consuming a unit of good $i$ and the disutility of producing good
$i+1$ are denoted by $U_{i}$ and $D_{i}$, respectively, with $U_{i}>D_{i}>0$,
and their difference with $u_{i}=U_{i}-D_{i}$. The storage cost of good $i$ is
$c_{i}$, measured in units of utility.\footnote{There is no restriction on the
sign of $c_{i}$. A negative storage cost is equivalent to a positive return.}
In addition to the three types of commodity there is a fourth object, called
money and denoted by $m$, that does not bring utility \textit{per se}: it only
serves as a means of transaction. Fiat money is indivisible. Denoting with $M$
the fraction of the population holding fiat money, the total quantity of fiat
money is $Q=NM$. There is no cost for storing money. At each instant in time,
an individual can hold one and only one unit of any type $i$ good or one unit
of money.\footnote{The assumption that an individual can have either 0 or 1
unit of an asset greatly simplifies the analysis and makes more transparent
the decision about the acceptance of money. There is a significant amount of
work with asset-holding restrictions. See, among others, Diamond (1982),
Rubinstein and Wolinsky (1987), Cavalcanti and Wallace (1999), and Duffie et
al. (2005).}

The discount rate is denoted by $\rho>0$. A pair of agents is randomly and
uniformly chosen from the population to meet for a possible trade. The
matching process is governed by a Poisson process with exogenous arrival rate
$\frac{\alpha N}{2}$, where $\alpha>0$ -- i.e. there is a constant returns to
scale matching technology. Hence, after a pair is formed, the expected waiting
time for the next pair to be formed is $\frac{2}{\alpha N}$. A bilateral trade
occurs if, and only if, it is mutually agreeable. Agent $i$ always accepts
good $i$ but never holds it because, provided that $u_{i}$ is sufficiently
large, there is immediate consumption (see KW, Lemma 1, p. 933). Therefore,
agent $i$ enters the market with either one unit of good $i+1$, or $i+2$, or
with one unit of $m$.

We introduce seignorage as Li (1994, 1995): The government extracts seignorage
revenue from money holders in the form of a money tax -- this device has been
used by many others, including in the recent work of Deviatov and Wallace
(2014). In particular, government agents meet and confiscate money from money
holders. The arrival rate of a government agent for a money holder is
$\delta_{m}$. A money holder of type $i$, whose unit of money is confiscated,
returns to the state of production without consumption, produces a new
commodity $i+1$, and incurs a disutility $D_{i}$. With the proceedings of the
tax revenue the government purchases goods from commodity holders. These
encounters are governed by a Poisson process with arrival rate $\delta_{g}$.
The government runs, on average, a balanced budget. This requires that
$\delta_{m}M=\delta_{g}(1-M),$ implying that $\delta_{g}=\frac{\delta_{m}%
M}{1-M}$. We allow the government to alter the rate of seignorage $\delta_{m}%
$, but, in order to simplify the dynamic analysis, the government does not
change the real balances in circulation -- i.e. the initial level of fiat
money is given and does not change over time. We will compare, however, steady
state equilibria of economies with different levels of $M$.

\subsection{Distribution of Commodities and Fiat Money}

Let $p_{i,j}(t)$ denote the proportion of type $i$ agents that hold good $j$
at time $t$. A type $i$ with good $j$ has to decide during a meeting whether
to trade $j$ for $k$, where $j,k=i+1$, $i+2$, or $m$ (henceforth we no longer
mention the ranges of the indices $i,$ $j$ and $k$, unless needed to prevent
confusion). Agent $i$'s decision in favor of trading $j$ for $k$ is denoted by
$s_{j,k}^{i}=1$, and that against it by $s_{j,k}^{i}=0$. The evolution of
$p_{i,j}$, for a given set of strategies, $s_{j,k}^{i}(t)$, is governed by a
system of differential equations\footnote{We assume that the influence of any
particular individual on the system is negligible. Eqs.
\eqref{pi}-\eqref{pizero} should be interpreted as the limit of the stochastic
evolution of the inventories of an economy with a finite number of agents. See
Araujo (2004) and Araujo et al. (2012) for a discussion of the system's
properties of similar economies with a finite number of agents.} (the time
index is dropped):
\begin{align}
\dot{p}_{i,i+1}=  &  \alpha\left\{  \sum_{i^{\prime}}\sum_{k}p_{i,k}%
p_{i^{\prime},i+1}s_{k,i+1}^{i}s_{i+1,k}^{i^{\prime}}+\sum_{i^{\prime}}%
p_{i,k}p_{i^{\prime},i}s_{i,k}^{i^{\prime}}-\sum_{i^{\prime}}\sum_{k}%
p_{i,i+1}p_{i^{\prime},k}s_{i+1,k}^{i}s_{k,i+1}^{i^{\prime}}\right\}
\nonumber\\
+  &  \delta_{m}p_{i,m}-\delta_{g}p_{i,i+1}\label{pi}\\
\dot{p}_{i,i+2}=  &  \alpha\left\{  \sum_{i^{\prime}}\sum_{k}p_{i,k}%
p_{i^{\prime},i+2}s_{k,i+2}^{i}s_{i+2,k}^{i^{\prime}}-\sum_{i^{\prime}}%
\sum_{k}p_{i,i+2}p_{i^{\prime},k}s_{i+2,k}^{i}s_{k,i+2}^{i^{\prime}}\right\}
-\delta_{g}p_{i,i+2}\label{pitwo}\\
\dot{p}_{i,m}=  &  \alpha\left\{  \sum_{i^{\prime}}\sum_{k}p_{i,k}%
p_{i^{\prime},m}s_{k,m}^{i}s_{m,k}^{i^{\prime}}-\sum_{i^{\prime}}\sum
_{k}p_{i,m}p_{i^{\prime},k}s_{m,k}^{i}s_{k,m}^{i^{\prime}}\right\} \nonumber\\
-  &  \delta_{m}p_{i,m}+\delta_{g}(p_{i,i+1}+p_{i,i+2}). \label{pizero}%
\end{align}
Focusing on the top equation ($\dot{p}_{i,i+1}$), the first two sums inside
the brackets, starting from the left, account for events that lead to an
increase in the share of individuals $i$ with $i+1$. Specifically, the term
$p_{i,k}p_{i^{\prime},i+1}$ is the probability that a type $i$ with good $k$
meets a type $i^{\prime}$ with good $i+1$, and $s_{k,i+1}^{i}s_{i+1,k}%
^{i^{\prime}}$ calculates their willingness to swap goods: if they both agree
to trade, $s_{k,i+1}^{i}s_{i+1,k}^{i^{\prime}}=1$; if one of the two does not,
$s_{k,i+1}^{i}s_{i+1,k}^{i^{\prime}}=0$. The term $p_{i,k}p_{i^{\prime},i}$
considers the residual case in which type $i$ with good $k$ meets a type
$i^{\prime}$ with good $i$. Because type $i$ always accepts good $i$, trade
take places as long as $s_{i,k}^{i^{\prime}}=1$. The third sum accounts for
events that cause a decline in $p_{i,i+1}$. Finally, the last two terms,
$\delta_{m}p_{i,m}$ and $\delta_{g}p_{i,i+1}$, measure the overall amount of
fiat money the government confiscates from money holders ($p_{i,m}$) and the
amount of goods $i+1$ it buys from type $i$ agents -- $\delta_{m}$ and
$\delta_{g}$ are the government's Poisson rates of intervention, respectively.

The extended form of eqs. \eqref{pi}-\eqref{pizero} consists of nine
non-autonomous non-linear differential equations in the nine unknowns
$p_{i,j}(t)$, for $i=1,2,3,$ and $j=i+1$, $i+2$, $m$ (they are non-autonomous
equations because $s_{j,k}^{i}(t)$ depends on $t$). Nevertheless, because
$p_{i,i}(t)=0$,
\begin{equation}
p_{i,i+1}(t)+p_{i,i+2}(t)+p_{i,m}(t)=\theta_{i}. \label{mui}%
\end{equation}
Another restriction comes from the following accounting relationship:
\begin{equation}
p_{1,m}(t)+p_{2,m}(t)+p_{3,m}(t)=M. \label{M}%
\end{equation}
Therefore, in \eqref{pi}-\eqref{pizero} there are only five independent
equations, and the state of the economy can be represented by the
five-dimensional vector ${\mathbf{p}}(t)=(p_{1,2}(t),p_{2,3}(t),p_{3,1}%
(t),p_{1,m}(t),p_{2,m}(t))$.

Let $\Omega$ be the set of $p^{i}_{j,k}(t)$ that satisfies \eqref{mui} and
\eqref{M}. For any sets of strategies $s_{j,k}^{i}(t)$, the solution of
\eqref{pi}--\eqref{pizero} maps $\Omega$ into itself. Because $\Omega$ is
compact and convex, the system \eqref{pi}--\eqref{pizero} admits at least one
fixed point for any constant sets of strategies $s_{j,k}^{i}$. Proposition
\ref{pro:stab-no-money} shows that in the simple scenario with no fiat money
($M=0$), the fixed point is unique and globally attractive. Section
\ref{numeri} studies the uniqueness and global attractiveness of the fixed
point numerically for an economy with fiat money.

\subsection{Value Functions}

The system of equations \eqref{pi}-\eqref{pizero} specifies the evolution of
the economy for a given set of strategies $s_{j,k}^{i}(t)$. We turn now to the
individuals' decisions about trading strategies. Denote with $\sigma_{j,k}%
^{i}(t)$ the strategies of a particular agent of type $i$, $a_{i}$, who takes
for given the strategies of the rest of the population, $s_{j,k}^{i}(t)$,
including agents of her own type, and who knows the initial state
${\mathbf{p}}(0)$. Let $V_{i,j}(t)$ be the integrated expected discounted flow
of utility from time $t$ onward of this particular agent $a_{i}$ with good $j$
at time $t$. Then,
\begin{equation}
V_{i,j}(t)=\int_{t}^{\infty}e^{-\rho(\tau-t)}\sum_{l}\pi_{l,j}^{i}%
(\tau,t)v_{i,l}({\mathbf{p}}(\tau))d\tau\text{,} \label{Vf}%
\end{equation}
where $\pi_{l,j}^{i}(\tau,t)$ is the probability that $a_{i}$, who carries
good $j$ at time $t$ and plays strategy $\sigma_{j,k}^{i}(t)$, holds good $l$
at time $\tau\geq t$, and $v_{i,l}({\mathbf{p}})$ is $a_{i}$'s flow of
utility, net of storage costs, associated to the distribution of holdings
${\mathbf{p}}$. Observe that $\pi_{l,j}^{i}(\tau,t)$ and $v_{i,l}({\mathbf{p}%
})$ both depend on what other individuals do, that is they are affected by
$s_{j^{\prime},k^{\prime}}^{i^{\prime}}(\tau)$, $\sigma_{j^{\prime},k^{\prime
}}^{i}(\tau)$ -- where $i^{\prime}=1,2,3$, and $j^{\prime},k^{\prime
}=i+1,i+2,m$. Because $\rho>0$ and $v_{i,l}({\mathbf{p}})$ is bounded, the
integral in (\ref{Vf})\ is well defined, for any $s_{j,k}^{i}(t)$ and any
${\mathbf{p}}(t)$. Appendix A contains the expression of $v_{i,l}({\mathbf{p}%
})$ and the evolution of $\pi_{l,j}^{i}(\tau,t)$. It also shows the following
result about the evolution of $V_{i,j}(t)$.

\begin{proposition}
\label{Vidot} The evolution of $V_{i,j}$ in (\ref{Vf})\ satisfies (time index
is dropped):
\begin{equation}
(\alpha+\delta_{k}+\rho)V_{i,j}=\dot{V}_{i,j}+\phi_{i,j}\text{,} \label{vidot}%
\end{equation}
where $\delta_{k}=\delta_{g}$ and
\begin{align}
\phi_{i,j}=  &  \alpha\left\{  \sum_{i^{\prime}}\sum_{k\not =i}p_{i^{\prime
},k}\sigma_{j,k}^{i}s_{k,j}^{i^{\prime}}V_{i,k}+\sum_{i^{\prime}}p_{i^{\prime
},i}s_{i,j}^{i^{\prime}}(V_{i,i+1}+u_{i})+\sum_{i^{\prime},k}p_{i^{\prime}%
,k}(1-\sigma_{j,k}^{i}s_{k,j}^{i^{\prime}})V_{i,j}\right\}  +\nonumber\\
+  &  \delta_{g}V_{i,m}-c_{j}\text{,} \label{phibis}%
\end{align}
when $j=i+1$ or $i+2$, and $\delta_{k}=\delta_{m}$ and%

\begin{align}
\phi_{i,m}=  &  \alpha\left\{  \sum_{i^{\prime}}\sum_{k\not =i}p_{i^{\prime
},k}\sigma_{m,k}^{i}s_{k,m}^{i^{\prime}}V_{i,k}+\sum_{i^{\prime}}p_{i^{\prime
},i}s_{i,m}^{i^{\prime}}(V_{i,i+1}+u_{i})+\sum_{i^{\prime},k}p_{i^{\prime}%
,k}(1-\sigma_{m,k}^{i}s_{k,m}^{i^{\prime}})V_{i,m}\right\}  +\nonumber\\
+  &  \delta_{m}(V_{i,i+1}-D_{i}), \label{phizero}%
\end{align}
when $j=m$.
\end{proposition}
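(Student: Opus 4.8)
The plan is to derive the ordinary differential equation (\ref{vidot}) for $V_{i,j}$ directly from the integral representation (\ref{Vf}) by differentiating with respect to $t$ and identifying the resulting terms. The starting point is the observation that $V_{i,j}(t)$ is defined as a discounted integral of the flow utility weighted by the transition probabilities $\pi^i_{l,j}(\tau,t)$. Since these transition probabilities describe a continuous-time Markov-type process governed by the Poisson matching and seignorage rates, I expect the key input to be a Kolmogorov-type differential equation for $\pi^i_{l,j}(\tau,t)$ in the backward variable $t$, together with the initial condition $\pi^i_{l,j}(t,t)=\delta_{l,j}$ (the Kronecker delta). The expressions for $v_{i,l}(\mathbf{p})$ and the evolution of $\pi^i_{l,j}$ are said to be supplied in Appendix A, so I would invoke those as given.

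First I would differentiate (\ref{Vf}) under the integral sign. Differentiating $e^{-\rho(\tau-t)}$ in $t$ produces a factor $\rho$ times the integrand, which reassembles into $\rho V_{i,j}$; the Leibniz boundary term at $\tau=t$ contributes $-v_{i,j}(\mathbf{p}(t))$ after using $\pi^i_{l,j}(t,t)=\delta_{l,j}$; and the derivative of $\pi^i_{l,j}(\tau,t)$ in $t$ contributes the remaining transition terms. Collecting these, I would obtain a relation of the schematic form $\dot V_{i,j}=\rho V_{i,j}-v_{i,j}(\mathbf{p}) - (\text{net transition flow into and out of state } j)$. The bookkeeping task is then to show that the net transition flow, generated by the agent $a_i$'s own strategies $\sigma^i_{j,k}$ against the population strategies $s^{i'}_{k,j}$ and the Poisson rates, rearranges exactly into $(\alpha+\delta_k+\rho)V_{i,j}-\dot V_{i,j}-\phi_{i,j}$, i.e. into the claimed $\phi_{i,j}$ with the correct $\delta_k\in\{\delta_g,\delta_m\}$.

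The core of the argument is therefore matching each event that can change $a_i$'s holdings against the corresponding term in (\ref{phibis}) or (\ref{phizero}). For $j=i+1$ or $i+2$: a successful trade of $j$ for some $k\neq i$ occurs at rate $\alpha p_{i',k}\sigma^i_{j,k}s^{i'}_{k,j}$ and moves the continuation value to $V_{i,k}$ (first sum); meeting a holder of good $i$, which $a_i$ always accepts and consumes, yields $V_{i,i+1}+u_i$ at rate $\alpha p_{i',i}s^{i'}_{i,j}$ (second sum); the no-trade events keep the agent in state $j$ with value $V_{i,j}$ (third sum); and the seignorage terms account for the possibility that the government buys good $j$ at rate $\delta_g$, sending the agent to the money state $V_{i,m}$, against the storage cost $-c_j$. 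The factor $(\alpha+\delta_k+\rho)$ on the left then emerges as the total exit-plus-discount rate, so that (\ref{vidot}) is simply the flow–value (Bellman) identity in continuous time. The case $j=m$ is parallel, with $\delta_m$ replacing $\delta_g$ and the confiscation event sending $a_i$ to the production-then-$i+1$ state with value $V_{i,i+1}-D_i$.

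The main obstacle I anticipate is the careful derivation of the backward differential equation for $\pi^i_{l,j}(\tau,t)$ and the verification that differentiating in the lower limit $t$ produces precisely the generator terms appearing in $\phi_{i,j}$, with the right signs and the correct separation of the diagonal (no-trade) contribution from the off-diagonal (trade and government) contributions. In particular, one must be careful that the agent $a_i$ uses her \emph{own} strategies $\sigma^i_{j,k}$ while the rest of the population uses $s^{i'}_{k,j}$, so the generator is not symmetric and the transition rates must be read off consistently. Once the generator is correctly identified, the remainder is the algebraic regrouping of the nine possible transitions into the compact sums of (\ref{phibis})--(\ref{phizero}), which is routine.
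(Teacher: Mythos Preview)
Your proposal is correct and follows essentially the same route as the paper: differentiate \eqref{Vf} in $t$, use $\pi^i_{l,j}(t,t)=\delta_{l,j}$ for the boundary term, and substitute a backward differential equation for $\pi^i_{l,j}(\tau,t)$ to recover the matrix Bellman identity, which is then expanded into \eqref{phibis}--\eqref{phizero}. The one point the paper makes explicit that you leave as an ``obstacle'' is how the backward equation is obtained: it first writes the \emph{forward} system $\frac{d}{d\tau}\Pi^i(\tau,t)=\mathcal{A}^i(\tau)^T\Pi^i(\tau,t)$ for the transition matrix, then uses the semigroup relation $\Pi^i(\tau,t-dt)=\Pi^i(\tau,t)\Pi^i(t,t-dt)$ to deduce $\frac{d}{dt}\Pi^i(\tau,t)=-\Pi^i(\tau,t)\mathcal{A}^i(t)^T$, which upon insertion into the differentiated integral yields $\dot{\mathbf V}_i=\rho\mathbf V_i-\mathcal{A}^i(t)\mathbf V_i-\mathbf v^i(t)$ and hence \eqref{vidot}.
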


\begin{proof}
See Appendix A.\bigskip
\end{proof}

The first sum in \eqref{phibis}, counting from the left, is the expected flow
of utility of agent $a_{i}$ with good $j$ conditional on meeting an agent
$i^{\prime}$ who carries a good $k\not =i$. Such a meeting occurs with
probability $p_{i^{\prime},k}$, and trade follows if $\sigma_{k,j}^{i}%
s_{j,k}^{i^{\prime}}=1$. In such a case, $a_{i}$ leaves the meeting with good
$k$ (i.e. with continuation value $V_{i,k}$). Similarly, the second sum
accounts for $a_{i}$'s expected flow of utility, conditional on meeting an
agent $i^{\prime}$ who carries good $i$. The last sum refers to meetings in
which no trade occurs, in which case $a_{i}$ is left with good $j$. The term
$\delta_{g}V_{i,m}$ is the expected continuation value for meeting government
agents who buy $a_{i}$'s good $j$ using fiat money, and $c_{j}$ is the cost of
storage. Because the term $\delta_{g}V_{i,m}$ appears in \eqref{phibis} for
both $j=i+1$ and $j=i+2$, the level of the seignorage does not directly affect
the optimal response $\sigma_{k,j}^{i}(t)$ -- it may do so only indirectly
through ${\mathbf{p}}(t)$.

\subsection{Best Response and Nash Equilibrium}

Agent $a_{i}$'s best response to the set of strategies of the other agents,
$s_{j,k}^{i}(t)$, is a set of strategy $\sigma_{j,k}^{i}(t)$ that maximizes
her expected flow of utility in (\ref{Vf}):
\begin{equation}
V_{i,j}\left(  t,\left\{  \sigma_{k,l}^{i}(\tau)\right\}
_{\substack{k,l=i+1,i+2,m\\\tau>t}}\right)  =\sup_{\tilde{\sigma}_{k,j}^{i}%
}V_{i,j}\left(  t,\left\{  \tilde{\sigma}_{k,l}^{i}(\tau)\right\}
_{\substack{k,l=i+1,i+2,m\\\tau>t}}\right)  \text{, for }\forall t\text{ and
}\forall j. \label{sup}%
\end{equation}
A useful characterization of the best response function is the following:

\begin{proposition}
\label{best} Let $\Delta_{j,k}^{i}(t)\equiv V_{i,j}(t)-V_{i,k}(t)$. The
strategy $\sigma_{j,k}^{i}(t)$ is a best response of $a_{i}$ to $s_{j,k}%
^{i}(t)$, for a given ${\mathbf{p}}(0)={\mathbf{p}}_{0}$, if and only if%

\begin{equation}
\sigma_{j,k}^{i}(t)=%
\begin{cases}
1 & \text{if }\Delta_{j,k}^{i}(t)<0\\
0 & \text{if }\Delta_{j,k}^{i}(t)>0\\
0.5 & \text{if }\Delta_{j,k}^{i}(t)=0\,
\end{cases}
. \label{consiste}%
\end{equation}

\end{proposition}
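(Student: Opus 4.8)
The plan is to characterize the best response as a pointwise optimization problem and then reduce it to the sign of the value-function differences $\Delta_{j,k}^i(t)$. First I would observe that the agent $a_i$ controls only her own strategies $\sigma_{j,k}^i(t)$, and that by Proposition~\ref{Vidot} the value functions $V_{i,j}$ satisfy the linear (in the controls $\sigma$) differential relation \eqref{vidot}, where $\phi_{i,j}$ given in \eqref{phibis}--\eqref{phizero} is affine in each $\sigma_{j,k}^i$. The key structural fact to extract is that raising $\sigma_{j,k}^i$ moves continuation weight from the ``no-trade'' term (coefficient $V_{i,j}$) to the ``trade'' term (coefficient $V_{i,k}$), so that $\partial \phi_{i,j}/\partial \sigma_{j,k}^i = \alpha \sum_{i'} p_{i',k}\, s_{k,j}^{i'}\,(V_{i,k}-V_{i,j}) = -\alpha \big(\sum_{i'} p_{i',k} s_{k,j}^{i'}\big)\,\Delta_{j,k}^i$. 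Thus the marginal effect of trading $j$ for $k$ on the instantaneous ``Hamiltonian'' term is governed entirely by the sign of $\Delta_{j,k}^i=V_{i,j}-V_{i,k}$, weighted by a nonnegative matching probability.

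Next I would set up the optimality argument properly. Since \eqref{sup} requires that $V_{i,j}(t)$ be maximized for every $t$ and every $j$ simultaneously over the admissible controls $\tilde\sigma_{k,l}^i$, I would treat \eqref{vidot} as defining an optimal-control / dynamic-programming problem with $V_{i,j}$ as the value and $\sigma_{j,k}^i$ as the control. Rewriting \eqref{vidot} as $\dot V_{i,j} = (\alpha+\delta_k+\rho)V_{i,j}-\phi_{i,j}$ and recalling from \eqref{Vf} that $V_{i,j}$ is the discounted integral of a bounded flow, the transversality/boundedness condition pins down the relevant solution. The verification step is then to show that choosing $\sigma_{j,k}^i$ to maximize $\phi_{i,j}$ pointwise in $t$ is both necessary and sufficient for the optimality condition \eqref{sup}: because $\phi_{i,j}$ is affine in $\sigma_{j,k}^i$ with the marginal coefficient computed above, the maximizing choice is the bang-bang rule $\sigma_{j,k}^i=1$ when $\Delta_{j,k}^i<0$ and $\sigma_{j,k}^i=0$ when $\Delta_{j,k}^i>0$, which is exactly \eqref{consiste}.

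For the knife-edge case $\Delta_{j,k}^i(t)=0$, the marginal coefficient vanishes so $\phi_{i,j}$ does not depend on $\sigma_{j,k}^i$ at that instant; any value in $[0,1]$ is optimal, and the stated convention $\sigma_{j,k}^i=0.5$ is a selection. I would remark that this is a measure-zero set of times under the generic dynamics, so the choice there does not affect the integrated value $V_{i,j}$ in \eqref{Vf}; the ``$0.5$'' is thus a normalization rather than a substantive restriction, and I would note that it is the natural symmetric choice consistent with the consistency requirement (``consiste'') that a best response agent behave like the population when indifferent.

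The main obstacle I anticipate is making the ``maximize $\phi_{i,j}$ pointwise'' step fully rigorous, rather than merely heuristic. The subtlety is that the controls $\sigma_{j,k}^i$ enter $\phi_{i,j}$ for different goods $j$ in a coupled way through the continuation values $V_{i,k}$ on the right-hand side of \eqref{vidot}, so one cannot naively optimize each equation in isolation: changing $\sigma$ today alters the whole trajectory $\{V_{i,l}\}$ and hence the future $\Delta$'s. The clean way to handle this is the standard verification-theorem logic for the stable-manifold/Perron formulation the paper adopts: assume a candidate strategy satisfying \eqref{consiste}, show the associated $V_{i,j}$ solves \eqref{vidot} with the right boundedness, and then show any deviating bounded control yields a (weakly) lower $V_{i,j}$ by integrating the differential inequality $\dot{\tilde V}_{i,j}-(\alpha+\delta_k+\rho)\tilde V_{i,j}\ge -\phi_{i,j}^{\max}$ against the integrating factor $e^{-(\alpha+\delta_k+\rho)t}$ and using the sign of $\Delta_{j,k}^i$ to control the cross terms. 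I expect that carefully tracking these coupled comparison inequalities — rather than the bang-bang characterization itself, which is immediate once the linearity is exposed — is where the real work lies.
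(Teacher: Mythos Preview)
Your approach is essentially correct but differs from the paper's. You frame the problem as a dynamic-programming verification argument: compute $\partial\phi_{i,j}/\partial\sigma^i_{j,k}$, identify the bang-bang candidate, and then certify it by a comparison/differential-inequality argument against any alternative control. The paper instead takes a \emph{variational} route: it perturbs the control as $\tilde\sigma^i_{i+1,i+2}=\sigma^i_{i+1,i+2}+\delta\eta$, differentiates the linear system $\dot{\mathbf V}_i=\rho\mathbf V_i-\mathcal A^i(t)\mathbf V_i-\mathbf v^i$ with respect to $\delta$, and uses Duhamel's formula with the fundamental solution $\Phi(t,\tau)$ of $\dot{\mathbf Z}=-\mathcal A^i(t)\mathbf Z$ to write $\partial_\delta\mathbf V_i(t)$ explicitly as an integral whose sign is controlled by $\Delta^i_{i+1,i+2}(\tau)$.

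The key ingredient the paper exploits, and which you do not invoke, is the \emph{Perron--Frobenius structure} of $\mathcal A^i$: off-diagonal entries are nonnegative and rows sum to zero, so $\Phi(t,\tau)$ has all entries nonnegative and strictly positive diagonal. This is precisely what dissolves the coupling obstacle you flag: even though a perturbation of $\sigma^i_{i+1,i+2}$ feeds through to \emph{every} component $V_{i,j}$, the positivity of $\Phi$ forces $\partial_\delta V_{i,j}\ge 0$ for all $j$ (with strict inequality for $j=i+1$) whenever $\eta\Delta<0$. Hence there is no interior critical point and the maximum sits at the boundary, giving \eqref{consiste} directly. Your verification route would also close, but it requires carrying the comparison for the full vector $\mathbf V_i$ through the coupled linear system---which, if you unwind it, amounts to using exactly the same positivity of $\Phi$ applied to the generator under the deviating control. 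So the two arguments converge on the same structural fact; the paper's variational presentation just makes that dependence explicit and avoids the bookkeeping you anticipate.
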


\begin{proof}
See Appendix A.
\end{proof}

When $a_{i}$ is indifferent between trading $j$ for $k$ with $j\neq k$, the
tie-breaking rule is that $\sigma_{j,k}^{i}(t)=0.5$.\footnote{There is only in
a finite set of isolated times $t_{l}$, $l=1,\ldots,L$ \ when $\Delta
_{j,k}^{i}$ can change sign. Therefore $s_{j,k}^{i}$ is discontinuous at
$t_{l}$. The value of $s_{j,k}^{i}(t_{l})$ at such points of discontinuity is
not relevant. Shevchenko and Wright (2004) make a similar observation.}
Clearly, \eqref{consiste} implies that $\sigma_{j,k}^{i}=1-\sigma_{k,j}^{i}$.
We also set $\sigma_{j,j}^{i}=0$, that is, $a_{i}$ never trades $j$ for $j$.
Therefore, the set of strategy of agent $a_{i}$ can be represented by
$\boldsymbol{\sigma}^{i}(t)=(\sigma_{i+1,m}^{i}(t),\sigma_{i+2,m}%
^{i}(t),\sigma_{i+1,i+2}^{i}(t))$, a piecewise continuous function from
$\mathbb{R}^{+}$ into $\Sigma
=\{(1,1,1),(1,0,1),(1,1,0),(0,1,0),(0,0,1),(0,0,0)\}$. Although agent $a_{i}$
has eight possible trading choices at each point in time, a simple
transitivity trading rule (for instance, if $\sigma_{2,3}^{1}=0$ and
$\sigma_{2,m}^{1}=1$, then it must also be that $\sigma_{3,m}^{1}=1$) reduces
her choices to the six contained in $\Sigma$ -- this applies to any type
$i=1,2,3$. We call $\boldsymbol{\sigma}(t)=(\boldsymbol{\sigma}^{1}%
(t),\boldsymbol{\sigma}^{2}(t),\boldsymbol{\sigma}^{3}(t))\in\Sigma^{3}$ the
best responses of the three particular agents $a_{i}$. Similarly,
$\mathbf{s}(t)=(\mathbf{s}^{1}(t),\mathbf{s}^{2}(t),\mathbf{s}^{3}%
(t))\in\Sigma^{3}$ denotes agents' symmetric strategies, with $\mathbf{s}%
^{i}(t)=(s_{i+1,m}^{i}(t),s_{i+2,m}^{i}(t),s_{i+1,i+2}^{i}(t))\in\Sigma$.
Next, following Nash (1950), we define an equilibrium by means of a function
$\boldsymbol{\sigma}=\boldsymbol{\mathcal{B}}({\mathbf{s}})$ that associates
the best response $\boldsymbol{\sigma}(t)$ to a set of strategies
$\mathbf{s}(t)$. The function $\boldsymbol{\mathcal{B}}$ transforms a
piecewise continuous function $\mathbf{s}:\mathbb{R}^{+}\rightarrow\Sigma^{3}$
into another piecewise continuous function $\boldsymbol{\sigma}%
=\boldsymbol{\mathcal{B}}({\mathbf{s}}):\mathbb{R}^{+}\rightarrow\Sigma^{3}$.

\begin{definition}
[Nash Equilibrium]Given an initial distribution ${\mathbf{p}}_{0}$, a set of
strategies ${\mathbf{s}}^{\ast}$ is a Nash equilibrium if it is a fixed point
of the map $\boldsymbol{\mathcal{B}}$:
\begin{equation}
{\mathbf{s}}^{\ast}=\boldsymbol{\mathcal{B}}({\mathbf{s}}^{\ast}).
\label{fixed}%
\end{equation}

\end{definition}

This definition equilibrium requires, therefore, that $\boldsymbol{\sigma}$,
the best response to the set of strategies ${\mathbf{s}}$, be equal to
${\mathbf{s}}$.

A general proof of the existence of such an equilibrium, for any given
${\mathbf{p}}_{0}$, cannot be obtained with the standard fixed-point argument
based on Kakutani or Brouwer theorems applied to finite games. These theorems
would require the best response function to be a continuous map on a convex
and compact set. Compactness, however, cannot be verified in our infinite time
horizon set up. Nevertheless, Proposition \ref{closeup} in Section (4) states
that sometimes the existence of Nash equilibria can be established
analytically near Nash steady states. Section \ref{dyna} constructs Nash
equilibria numerically, even when their existence cannot be established analytically.

\section{Overview of Steady States}

\label{steady}

We begin by considering an economy with no fiat money ($M=0$) and $\theta
_{i}=\frac{1}{3}$. Since the first two rows of $\mathbf{s}$ refer to the
acceptance of fiat money -- recall that the rows of $\mathbf{s}$ are
associated to objects and the columns to types -- we focus in what follows on
the entries of its third row, $\mathbf{s}_{3}$, shows how agents $i$ order
good $i+1$ and good $i+2$. For instance, when $\mathbf{s}_{3}=(0,1,0)$, type 2
trade $i+1$ for $i+2$ (i.e. 3 for 1), whereas types 1 and 3 do not. In
addition, because $p_{1,m}=p_{2,m}=0$, it is convenient to shorten
${\mathbf{p}}$ into $\mathbf{\hat{p}}=(p_{1,2}$, $p_{2,3}$, $p_{3,1})$.

Assume $c_{1}<c_{2}<c_{3}$ (model A of KW). There are eight possible
combinations of (pure)\ strategies. Two of them are\ Nash equilibria:
\begin{equation}
\mathbf{s}_{3}\mathbf{=}(0,1,0)\text{ with }\mathbf{\hat{p}}=\frac{1}%
{3}\left(  1,\,\frac{1}{2},\,1\right)  ,\text{ } \label{F}%
\end{equation}
if%
\begin{equation}
\frac{c_{3}-c_{2}}{u_{1}\alpha}>p_{3,1}-p_{2,1}=\frac{1}{6}\text{,} \label{CF}%
\end{equation}
and%
\begin{equation}
\mathbf{s}_{3}\mathbf{=}(1,1,0)\text{ with }\mathbf{\hat{p}}=\frac{1}%
{3}\left(  \frac{1}{2}\sqrt{2},\,\sqrt{2}-1,\,1\right)  , \label{S}%
\end{equation}
if%
\begin{equation}
\frac{c_{3}-c_{2}}{u_{1}\alpha}<p_{3,1}-p_{2,1}=\frac{\sqrt{2}}{3}(\sqrt
{2}-1), \label{CS}%
\end{equation}
where the $p_{i,j}$ in (\ref{CF})\ and (\ref{CS}) are evaluated in the
respective steady states. These are usually referred as the
\textit{fundamental} and \textit{speculative} steady states, respectively.

Rearranging the ranking of the storage cost as $c_{3}<c_{2}<c_{1}$ one obtains
steady states similar to those in Model B of KW. The equilibrium
\[
\mathbf{s}_{3}\mathbf{=}(1,0,1)\text{ with }\mathbf{p}=\frac{1}{3}\left(
\sqrt{2}-1,\,1,\,\frac{\sqrt{2}}{2}\right)  \text{,}%
\]
always exists. The equilibrium
\[
\mathbf{s}_{3}\mathbf{=}(0,1,1)\text{ with }\mathbf{p}=\frac{1}{3}\left(
1,\,\frac{\sqrt{2}}{2},\,\sqrt{2}-1\right)  \text{,}%
\]
exists if
\begin{equation}
\frac{c_{3}-c_{1}}{u_{2}\alpha}>p_{3,2}\ -p_{1,2}=\frac{\sqrt{2}}{3}%
(1-\sqrt{2})\ \text{,} \label{cond1b}%
\end{equation}
and%
\begin{equation}
\frac{c_{2}-c_{3}}{u_{1}\alpha}<p_{2,1}=\frac{\sqrt{2}}{6}\text{ }
\label{cond2b}%
\end{equation}
are satisfied, where $p_{1,2}$, $p_{3,2}$, and $p_{1,2}$ are evaluated on the
$\mathbf{s}_{3}\mathbf{=}(0,1,1)$ steady state. Other equilibria emerge under
other rearrangements of the storage costs. Next proposition states which
steady state equilibria are globally stable.

\begin{proposition}
\label{pro:stab-no-money} With the possible exception of $\mathbf{s}%
_{3}=(1,1,1)$, under any other constant set of strategies\ $\mathbf{s}%
_{3}=(s_{2,3}^{1},s_{3,1}^{2},s_{1,2}^{3})$, $\mathbf{\hat{p}}(t)$ converges
to a stationary distribution, $\mathbf{\hat{p}}^{\ast}$, from any
${\mathbf{\hat{p}}}(0)$.
\end{proposition}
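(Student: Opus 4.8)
The plan is to exploit the special structure of the dynamical system \eqref{pi}--\eqref{pizero} when $M=0$. In this case $p_{i,m}=0$ for all $i$, the terms involving $\delta_m$ and $\delta_g$ vanish, and the state collapses to $\mathbf{\hat p}=(p_{1,2},p_{2,3},p_{3,1})$. With $\theta_i=\tfrac13$ and a \emph{constant} strategy profile $\mathbf{s}_3$, the agreement flags $s^i_{j,k}$ are fixed constants, so \eqref{pitwo} becomes an \emph{autonomous} polynomial (indeed quadratic) system of three ODEs on the simplex-type region $\Omega$. The first step is therefore to write out, for a generic fixed $\mathbf{s}_3$, the reduced three-dimensional vector field $\dot{\mathbf{\hat p}}=F(\mathbf{\hat p})$ explicitly, using the constraint $p_{i,i+1}+p_{i,i+2}=\tfrac13$ to eliminate $p_{i,i+2}=\tfrac13-p_{i,i+1}$. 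Because each equation is at most quadratic and each component $p_{i,i+1}$ only interacts with the others through the trading probabilities, I expect the system to nearly decouple, and in most of the eight cases to reduce to three scalar Riccati-type equations $\dot x_i=a_i-b_i x_i-\gamma_i x_i x_{\sigma(i)}$.

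Next I would establish forward invariance of $\Omega$ under $F$: on each face $p_{i,i+1}=0$ the inflow term is nonnegative and the outflow term vanishes, so $\dot p_{i,i+1}\ge0$ there, and symmetrically the upper faces are repelling inward; this guarantees solutions stay in the compact convex set $\Omega$ for all $t\ge0$, so global existence is automatic. The core step is then to prove convergence to the (already-known-to-exist) fixed point $\mathbf{\hat p}^{\ast}$. My main tool would be to show the flow is essentially a monotone or cooperative system, or to construct a Lyapunov function. Concretely, in the decoupled or triangularly-coupled cases one can solve the scalar Riccati equations in cascade: the first variable converges to its equilibrium by elementary comparison of the sign of $\dot x_i$ on either side of the root, and then each subsequent variable, driven by an input that converges, inherits convergence by an asymptotic-autonomy argument. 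I would verify that for each admissible constant $\mathbf{s}_3\neq(1,1,1)$ the relevant coupling is acyclic (triangular) rather than genuinely cyclic, which is precisely what makes the cascade work.

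The hard part will be the genuinely three-cyclic configurations, where all three types trade their $i+1$ good for $i+2$ symmetrically and the coupling $x_1\to x_2\to x_3\to x_1$ forms a loop, so the cascade argument breaks down. This is exactly why $\mathbf{s}_3=(1,1,1)$ is singled out as the exception: there the symmetric trading creates a closed cycle with no input-output ordering, and global attractiveness cannot be obtained by the elementary monotone cascade. For the remaining six nontrivial cases I expect that at least one type refuses the relevant trade, breaking the cycle and reinstating a triangular structure. To handle these rigorously I would either (i) exhibit an explicit strict Lyapunov function of the form $L(\mathbf{\hat p})=\sum_i w_i(p_{i,i+1}-p^{\ast}_{i,i+1})^2$ with weights $w_i$ chosen so that $\dot L<0$ off the equilibrium, using the sign structure of the quadratic cross-terms, or (ii) invoke that a planar or triangular-plus-one-dimensional reduction has no periodic orbits (via a Dulac/Bendixson criterion on the effectively two-dimensional invariant subsystem), so by Poincar\'e--Bendixson every bounded trajectory converges to the unique fixed point. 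I would then check the eight cases one by one, confirming the fixed point computed in \eqref{F}--\eqref{cond2b} is the unique interior equilibrium and that boundary equilibria, when present, are unstable, thereby ruling them out as limits from a generic interior initial condition.
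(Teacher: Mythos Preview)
Your plan is essentially the paper's own proof: reduce to the autonomous three-dimensional system, go case by case, use a cascade argument when enough strategies are zero, and for the residual coupled piece apply a Bendixson/Poincar\'e--Bendixson argument on the planar subsystem. One small correction to your expectations: the two-ones cases such as $\mathbf{s}_3=(1,1,0)$ are \emph{not} fully triangular---one coordinate decouples (e.g.\ $p_{3,1}\to\theta_3$), but the remaining pair $(p_{1,2},p_{2,3})$ is genuinely two-way coupled, so the pure cascade fails there and you must use your option (ii); the paper does exactly this, observing that the planar Jacobian has strictly negative trace (hence negative divergence, ruling out periodic orbits by Bendixson) and strictly positive determinant throughout the invariant rectangle, so Poincar\'e--Bendixson forces convergence to the unique fixed point.
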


\begin{proof}
See Appendix A.
\end{proof}

Consider now the full-fledged model with $M>0$. Adding fiat money into the
model greatly increases the number of steady states that could qualify to be
Nash equilibria. Considering that each type has six possible choices, there
are $6^{3}$ steady states to be verified. Fig. \ref{onethird} illustrates how
variations of $M$ and $\delta_{m}$ affect the emergence of a particular
equilibrium for a Model A economy ($c_{1}<c_{2}<c_{3}$). It considers
equilibria with $\mathbf{s}_{3}=(0,1,0)$ or $\mathbf{s}_{3}=(1,1,0)$ -- the
same two type of equilibria reviewed above for the economy without fiat money.
In reviewing the monetary equilibria, it is useful to keep in mind that money
is accepted for liquidity reasons and to save on storage costs. Money holders,
however, incur a seignorage tax. When this becomes sufficiently large, some
may prefer to face longer waiting times in getting the preferred consumption
good (lower liquidity), and to pay a higher storage cost, rather than holding
fiat money. Since liquidity depends both on the distribution of commodities
and on the magnitude of storage costs, it is conceivable that some do not
accept money even when others do so.

Fig. \ref{onethird} shows that an $\mathbf{s}_{3}=(1,1,0)$ full monetary
equilibrium -- i.e. with $\mathbf{s}_{1}=\mathbf{s}_{2}=(1,1,1)$ -- emerges
for a relatively low stock of fiat money and for low rates of seignorage. As
seignorage gains in importance, however, fiat money becomes less desirable, to
the point that a type 2 is no longer willing to sell good $1$ against fiat
money -- that is, $s_{1,m}^{2}$ switches from $1$ to $0$, so that the set of
strategies becomes $\mathbf{s}=%
\begin{pmatrix}
1 & 1 & 1\\
1 & 0 & 1\\
1 & 1 & 0
\end{pmatrix}
$.

At higher levels of fiat money, good 3 loses its role of commodity money, even
at low seignorage rates, because a type 1's odds of meeting type 3 holding
good 1 shrink. Therefore, a type 1 no longer finds it convenient to pay the
high storage cost of good 3. Hence, the Nash equilibrium is characterized by
$\mathbf{s}=%
\begin{pmatrix}
1 & 1 & 1\\
1 & 1 & 1\\
0 & 1 & 0
\end{pmatrix}
$. At intermediate ranges of the rate of seignorage, $s_{1,m}^{2}$ can be
either $1$ or $0$ or both, that\ is, $\mathbf{s}_{3}=(1,1,0)$ and
$\mathbf{s}_{3}=(0,1,1)$ -- a case of multiple equilibria due to
inflation.\bigskip

\section{Finding Nash Equilibria}

\label{dyna}

This section studies the conditions for obtaining a Nash equilibrium
$({\mathbf{p}}(t),{\mathbf{s}}(t))$ that converges to a steady state one,
starting from an arbitrary initial distribution ${\mathbf{p}}(0)$. It begins
with a proposition that deals with convergence in a neighborhood of a Nash
steady state equilibrium.

\begin{proposition}
\label{closeup} Let $({\mathbf{p}}^{\ast},{\mathbf{s}}^{\ast})$ be a Nash
steady state equilibrium, with ${\mathbf{p}}^{\ast}$ being asymptotically
stable for \eqref{pi}-\eqref{pizero}. There exists an $\epsilon>0$ such that,
if $\Vert{\mathbf{p}}_{0}-{\mathbf{p}}^{\ast}\Vert\leq\epsilon$, the pattern
$({\mathbf{p}}(t),{\mathbf{s}}^{\ast})$, with ${\mathbf{p}}(0)={\mathbf{p}%
}_{0}$, is a Nash Equilibrium.\footnote{We define $\Vert{\mathbf{p}}%
\Vert=\sqrt{\sum_{i,j}p_{i,j}^{2}}$.}
\end{proposition}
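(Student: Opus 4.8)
The plan is a verification argument for the fixed-point condition $\mathbf{s}^{\ast}=\boldsymbol{\mathcal{B}}(\mathbf{s}^{\ast})$: hold the population strategy constant at $\mathbf{s}^{\ast}$, let $\mathbf{p}(t)$ evolve under \eqref{pi}--\eqref{pizero} from $\mathbf{p}_{0}$, and show that the best response induced \emph{along this trajectory} is again $\mathbf{s}^{\ast}$. I would first record a non-degeneracy fact. Since $\mathbf{s}^{\ast}\in\Sigma^{3}$ is a \emph{pure}-strategy Nash steady state, Proposition \ref{best} forces every decision-relevant difference $\Delta_{j,k}^{i,\ast}=V_{i,j}^{\ast}-V_{i,k}^{\ast}$, evaluated at $(\mathbf{p}^{\ast},\mathbf{s}^{\ast})$, to be strictly nonzero; a vanishing difference would make the unique best response the tie value $0.5\notin\{0,1\}$. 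Hence
\[
\eta:=\min_{i,(j,k)}|\Delta_{j,k}^{i,\ast}|>0,
\]
the minimum being over the three comparisons $(i+1,m),(i+2,m),(i+1,i+2)$ of each type. By Proposition \ref{best} it then suffices to keep each $\Delta_{j,k}^{i}(t)$ within $\eta$ of its steady-state value for all $t$: the signs, and therefore the best response, are then exactly those of $\mathbf{s}^{\ast}$ (the isolated switching times are immaterial, as noted after Proposition \ref{best}).

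The core estimate is a uniform-in-time bound on the value functions. With the population and the candidate individual strategy both held at $\mathbf{s}^{\ast}$, the transition kernels $\pi_{l,j}^{i}(\tau,t)$ and the flow utilities $v_{i,l}(\mathbf{p}(\tau))$ in \eqref{Vf} are Lipschitz functionals of the path $\mathbf{p}(\cdot)$: the generator driving $\pi$ is affine in $\mathbf{p}$ and $v_{i,l}$ is smooth, so if $\sup_{\tau\geq0}\|\mathbf{p}(\tau)-\mathbf{p}^{\ast}\|\leq\delta$ a Duhamel/Gr\"onwall estimate gives $\|\pi_{l,j}^{i}(\tau,t)-\pi_{l,j}^{i,\ast}(\tau-t)\|\leq L\,\delta\,(\tau-t)$ while the kernels remain stochastic. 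Inserting this into \eqref{Vf} and using $\sum_{l}\pi_{l,j}^{i}=1$,
\[
|V_{i,j}(t)-V_{i,j}^{\ast}|\leq\int_{t}^{\infty}e^{-\rho(\tau-t)}\Big(\|v\|_{\mathrm{Lip}}\,\delta+\|v\|_{\infty}L\,\delta\,(\tau-t)\Big)d\tau\leq C\,\delta,
\]
with $C=\|v\|_{\mathrm{Lip}}/\rho+\|v\|_{\infty}L/\rho^{2}$ independent of $t$. Here the discount factor $e^{-\rho(\tau-t)}$ is essential: it absorbs the linear-in-time Gr\"onwall growth over the infinite horizon. Equivalently, one may read the same bound off the linear system \eqref{vidot}, selecting its unique bounded (Perron) solution; the discounting makes that selection continuous in the forcing term $\mathbf{p}(\cdot)-\mathbf{p}^{\ast}$.

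To close the loop I would invoke stability of $\mathbf{p}^{\ast}$. Choose $\delta$ with $2C\delta<\eta$. Because $\mathbf{p}^{\ast}$ is asymptotically, hence in particular Lyapunov, stable for the autonomous system \eqref{pi}--\eqref{pizero} at the fixed strategy $\mathbf{s}^{\ast}$, there is an $\epsilon>0$ such that $\|\mathbf{p}_{0}-\mathbf{p}^{\ast}\|\leq\epsilon$ implies $\sup_{t\geq0}\|\mathbf{p}(t)-\mathbf{p}^{\ast}\|\leq\delta$. Then $|\Delta_{j,k}^{i}(t)-\Delta_{j,k}^{i,\ast}|\leq2C\delta<\eta$ for all $t$, so each $\Delta_{j,k}^{i}(t)$ inherits the sign of $\Delta_{j,k}^{i,\ast}$, and by Proposition \ref{best} the best response to $\mathbf{s}^{\ast}$ along $\mathbf{p}(t)$ equals $\mathbf{s}^{\ast}$ itself. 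Thus $\mathbf{s}^{\ast}=\boldsymbol{\mathcal{B}}(\mathbf{s}^{\ast})$ for this $\mathbf{p}_{0}$, i.e.\ $(\mathbf{p}(t),\mathbf{s}^{\ast})$ is a Nash equilibrium.

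I expect the main obstacle to be the uniform-in-$t$ value-function estimate of the second paragraph, specifically controlling the sensitivity of the infinite-horizon kernels $\pi_{l,j}^{i}(\tau,t)$ to a perturbation of the entire path $\mathbf{p}(\cdot)$. The naive Gr\"onwall bound on this sensitivity grows with $\tau-t$, and it is only the presence of strict discounting $\rho>0$ that renders the resulting integral finite and the bound uniform; establishing the affine structure and the requisite Lipschitz constants carefully is the one genuinely technical point, whereas the non-degeneracy step and the final stability argument are routine given Propositions \ref{best} and the stability hypothesis.
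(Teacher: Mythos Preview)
Your proof is correct and follows the same three-step logic as the paper: (i) Lyapunov stability of $\mathbf{p}^{\ast}$ pins $\mathbf{p}(t)$ uniformly near $\mathbf{p}^{\ast}$; (ii) this forces $\mathbf{V}_{i}(t)$ uniformly near $\mathbf{V}_{i}^{\ast}$; (iii) the strict inequalities $\Delta_{j,k}^{i,\ast}\neq 0$ then persist, so Proposition~\ref{best} returns $\mathbf{s}^{\ast}$.

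The only difference is in the execution of step (ii). You work from the integral representation \eqref{Vf}, perturb the kernel $\pi$ via Gr\"onwall to obtain linear-in-$(\tau-t)$ growth, and let the discount factor $e^{-\rho(\tau-t)}$ absorb it. The paper instead works directly with the ODE \eqref{vidot}: writing $\delta\mathbf{V}_{i}=\mathbf{V}_{i}-\mathbf{V}_{i}^{\ast}$ and subtracting the steady-state equation yields $\dot{\delta\mathbf{V}}_{i}=\rho\,\delta\mathbf{V}_{i}-\mathcal{A}(t)\delta\mathbf{V}_{i}-\mathbf{w}(t)$ with $\|\mathbf{w}(t)\|\leq C\epsilon$, and a Duhamel formula together with the generator contraction $\|\Phi(t,s)\|_{\infty}\leq 1$ (their Lemma~\ref{PFP}) gives $\|\delta\mathbf{V}_{i}(t)\|\leq C\epsilon/\rho$ directly, with no intermediate $(\tau-t)$ growth to control. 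This is precisely the alternative you flag in your last sentence of the second paragraph; the paper's route is a bit cleaner and yields a sharper constant, but the two are equivalent and you correctly identified the technical crux.
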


\begin{proof}
See Appendix A
\end{proof}

Clearly, when the initial condition ${\mathbf{p}}(0)$ is outside of a small
neighborhood of the Nash steady state, $({\mathbf{p}}^{\ast},{\mathbf{s}%
}^{\ast})$, the Nash set of strategies ${\mathbf{s}}(t)$ may be different than
${\mathbf{s}}^{\ast}$. In such a case, to evaluate whether an ${\mathbf{s}%
}(t)$ is a Nash solution, one needs to verify whether any agent has an
incentive to deviate from such an ${\mathbf{s}}(t)$. In terms of \eqref{sup},
one needs to check if there is any gap between $\mathbf{s}^{i}(t)$ and
${\boldsymbol{\sigma}}^{i}(t)$. The best response ${\boldsymbol{\sigma}}%
^{i}(t)$ can be computed on the basis of the value value functions
$V_{i,j}(t)$ defined in \eqref{sup}. Eq. \eqref{vidot} specifies the time
variation of $V_{i,j}(t)$. Unfortunately the initial value $V_{i,j}(0)$ is
unknown. What is known, however, is the value of $V_{i,j}$ on the Nash steady
state $({\mathbf{p}}^{\ast},{\mathbf{s}}^{\ast})$. This information can be
used to integrate \eqref{vidot} backward in time, starting from a neighborhood
of the Nash steady state. We then proceed as follows.

First, we integrate the system \eqref{pi}-\eqref{pizero} starting from the
initial condition ${\mathbf{p}}(0),$ under the set of strategies
$\mathbf{s}(t)$, an operation that yields ${\mathbf{p}}(t)$, for $0\leq t\leq
T$, where $T$ must be sufficiently large so that $\Vert{\mathbf{p}%
}(T)-\mathbf{p}^{\ast}\Vert\leq\epsilon$. Let $T_{\epsilon}$ be a $T$ that
satisfies this constraint.

Second, we set ${\mathbf{V}}_{i}(T_{\epsilon})={\mathbf{V}}_{i}^{\ast}$, where
${\mathbf{V}}_{i}(t)=\{V_{i,j}(t)\}_{j=i+1,i+2,m}$, and use \eqref{vidot} to
compute ${\mathbf{V}}_{i}(t)$ for $t<T_{\epsilon}$. Said it differently, we
use ${\mathbf{V}}_{i}^{\ast}$ as final condition and integrate \eqref{vidot}
backward in time. The resulting ${\mathbf{V}}_{i}(t)$, however, is not yet
necessarily the true value function because $T_{\epsilon}$ is finite.

Additional details must still be verified to arrive to the true value
function. Let ${\mathbf{V}}_{i}(t,T,\mathbf{W})$ be the unique solution of
\eqref{vidot}, for a given $({\mathbf{s}}(t),{\mathbf{p}}(t))$, with $t\leq T$
under the condition that for $t=T$, ${\mathbf{V}}_{i}(T)={\mathbf{W}}$, i.e.
${\mathbf{V}}_{i}(T,T,{\mathbf{W}})={\mathbf{W}}$. The below proposition
states that it is reasonable to pick a ${\mathbf{W}}={\mathbf{V}}_{i}^{\ast}$
as initial condition to integrate the system \eqref{vidot} backward in time.

\begin{proposition}
\label{compu} Consider a pattern $({\mathbf{s}}(t),{\mathbf{p}}(t))$ that
converges to a steady state Nash equilibrium $({\mathbf{s}}^{\ast}%
,{\mathbf{p}}^{\ast})$. Let ${\mathbf{V}}_{i}^{\ast}$ be the value function
evaluated at the Nash steady state $({\mathbf{s}}^{\ast},{\mathbf{p}}^{\ast}%
)$, and let ${\mathbf{V}}_{i}(t,T,{\mathbf{W}})$ be the unique solution of
\eqref{vidot} with ${\mathbf{V}}_{i}(T,T,{\mathbf{W}})={\mathbf{W}}$. For
every $t\leq T$
\[
\Vert{\mathbf{V}}_{i}(t)-{\mathbf{V}}_{i}(t,T,{\mathbf{V}}_{i}^{\ast}%
)\Vert\leq\sqrt{3}e^{-\rho(T-t)}\Vert{\mathbf{V}}_{i}(T)-{\mathbf{V}}%
_{i}^{\ast}\Vert\,.
\]

\end{proposition}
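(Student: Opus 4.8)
The plan is to read \eqref{vidot} as a system of ODEs for the vector $\mathbf{V}_i(t)=(V_{i,i+1},V_{i,i+2},V_{i,m})$ with coefficients fixed by the given pattern $(\mathbf{s}(t),\mathbf{p}(t))$ (and the associated best responses), and to compare the true value function, which by Proposition \ref{Vidot} and uniqueness is $\mathbf{V}_i(t)=\mathbf{V}_i(t,T,\mathbf{V}_i(T))$, with the backward solution $\mathbf{V}_i(t,T,\mathbf{V}_i^{\ast})$. Writing \eqref{vidot} as $\dot V_{i,j}=(\alpha+\delta_k+\rho)V_{i,j}-\phi_{i,j}$ and noting from \eqref{phibis}--\eqref{phizero} that the $V$-dependence of $\phi_{i,j}$ is affine with coefficients independent of the terminal data, I would form the difference $\mathbf{D}_i(t)=\mathbf{V}_i(t)-\mathbf{V}_i(t,T,\mathbf{V}_i^{\ast})$. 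The inhomogeneous pieces ($-c_j$ and the $u_i$, $D_i$ terms) cancel, so $\mathbf{D}_i$ solves the homogeneous linear system $\dot D_{i,j}=(\alpha+\delta_k+\rho)D_{i,j}-\sum_l b^{(i)}_{jl}D_{i,l}$, with $\mathbf{D}_i(T)=\mathbf{V}_i(T)-\mathbf{V}_i^{\ast}$.

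The crucial structural step is to verify the rate-conservation property of the coefficients: each $b^{(i)}_{jl}\ge 0$, and for each fixed $j$ the row sum $\sum_l b^{(i)}_{jl}$ equals exactly $\alpha+\delta_k$ (with $\delta_k=\delta_g$ for $j=i+1,i+2$ and $\delta_k=\delta_m$ for $j=m$), the very factor that multiplies $V_{i,j}$ on the left of \eqref{vidot}. This holds because the bracketed $\alpha$-terms are meeting probabilities: the trade, consumption and no-trade weights are nonnegative and, using $\sum_{i',k}p_{i',k}=1$ together with $\sigma^i_{j,k}s^{i'}_{k,j}\in[0,1]$, they sum to $\alpha$, while the government term supplies the remaining $\delta_g$ (resp. $\delta_m$). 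The only point needing care is the consumption term $\sum_{i'}p_{i',i}s^{i'}_{i,j}(V_{i,i+1}+u_i)$, which redirects onto $V_{i,i+1}$ the weight that would sit on $V_{i,i}$; since the agent never holds its own good, the three weights are still distributed among $\{i+1,i+2,m\}$ and the row sum is unchanged.

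With this in hand I would run a differential inequality in the sup norm. Set $\psi(t)=\max_j|D_{i,j}(t)|$; if $\psi(t_0)=0$ then uniqueness forces $\mathbf{D}_i\equiv 0$ and the bound is trivial, so assume $\psi>0$. Letting $j^{\ast}$ attain the maximum at $t$, in the case $D_{i,j^{\ast}}=\psi>0$ I would use $D_{i,l}\le\psi$, nonnegativity of $b^{(i)}_{j^{\ast}l}$, and the row-sum identity to get
\[
\dot D_{i,j^{\ast}}=(\alpha+\delta_k+\rho)\psi-\sum_l b^{(i)}_{j^{\ast}l}D_{i,l}\ \ge\ (\alpha+\delta_k+\rho)\psi-(\alpha+\delta_k)\psi=\rho\psi,
\]
and symmetrically $\tfrac{d}{dt}(-D_{i,j^{\ast}})\ge\rho\psi$ when $D_{i,j^{\ast}}=-\psi$. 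Since $\psi$ is the maximum of finitely many functions that are smooth wherever they are nonzero, its right derivative equals that of an active component, so $\psi'_+\ge\rho\psi$; hence $e^{-\rho t}\psi(t)$ is nondecreasing and $\psi(t)\le e^{-\rho(T-t)}\psi(T)$ for all $t\le T$. Converting back to the Euclidean norm of the statement via $\|\mathbf{x}\|\le\sqrt 3\,\|\mathbf{x}\|_\infty$ and $\|\mathbf{x}\|_\infty\le\|\mathbf{x}\|$ in dimension three gives
\[
\|\mathbf{D}_i(t)\|\le\sqrt 3\,\psi(t)\le\sqrt 3\,e^{-\rho(T-t)}\psi(T)\le\sqrt 3\,e^{-\rho(T-t)}\|\mathbf{D}_i(T)\|,
\]
which is the claim since $\mathbf{D}_i(T)=\mathbf{V}_i(T)-\mathbf{V}_i^{\ast}$.

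The main obstacle is the structural step: pinning down that the $V$-coefficients of $\phi_{i,j}$ are nonnegative and sum to precisely $\alpha+\delta_k$, with the correct bookkeeping of the consumption and seignorage terms, so that $\rho$ is exactly the excess of the diagonal rate over the total transition rate; once that is isolated, the contraction rate $e^{-\rho(T-t)}$ and the constant $\sqrt3$ are immediate. If instead one keeps $\sigma$ as the best response, so that \eqref{vidot} becomes the Hamilton--Jacobi form $\dot V_{i,j}=(\alpha+\delta_k+\rho)V_{i,j}-\max_{\sigma^i}\phi_{i,j}$, the same estimate survives: comparing the two solutions using the maximizer of one of them yields $|\max_{\sigma^i}\phi_{i,j}(\mathbf{V}^{(1)})-\max_{\sigma^i}\phi_{i,j}(\mathbf{V}^{(2)})|\le(\alpha+\delta_k)\|\mathbf{V}^{(1)}-\mathbf{V}^{(2)}\|_\infty$, which is all the differential inequality uses.
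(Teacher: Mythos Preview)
Your argument is correct and rests on exactly the structural fact the paper isolates (its Lemma~\ref{PFP} and the identity $\sum_l\mathcal A^i_{j,l}=0$): the $V$-coefficients in $\phi_{i,j}$ are nonnegative and sum to the diagonal rate $\alpha+\delta_k$, so the excess is precisely $\rho$, yielding an $\ell^\infty$-contraction at rate $e^{-\rho(T-t)}$, after which the $\sqrt3$ comes from the norm equivalence in dimension three. The only difference is packaging: the paper writes the difference via Duhamel as $e^{\rho(t-T)}\Phi(t,T)(\mathbf W_1-\mathbf W_2)$ and proves $\|\Phi(t,T)\|_\infty\le 1$ through the product approximation $\Phi(t,T)=\lim_N\prod_k(1+\delta t\,\mathcal A(t_k))$ with row-stochastic factors, whereas you extract the same bound by a maximum-principle differential inequality on $\psi(t)=\|\mathbf D_i(t)\|_\infty$. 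These are standard and equivalent devices; your route is slightly more self-contained since it avoids introducing the fundamental solution, while the paper's formulation yields as an immediate byproduct that $\mathbf V_i(t)=\lim_{T\to\infty}\mathbf V_i(t,T,\mathbf W)$ independently of $\mathbf W$.
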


\begin{proof}
See Appendix A.
\end{proof}

The fact that $\Vert{\mathbf{p}}(T_{\epsilon})-\mathbf{p}^{\ast}\Vert
\leq\epsilon$ implies that $\Vert{\mathbf{V}}_{i}(T_{\epsilon})-{\mathbf{V}%
}_{i}^{\ast}\Vert=O(\epsilon)$. Proposition \ref{compu} states that choosing
${\mathbf{V}}_{i}^{\ast}$ as a boundary value, \eqref{vidot} yields a good
approximation ${\mathbf{V}}_{i}(t,T_{\epsilon},{\mathbf{V}}_{i}^{\ast})$ for
${\mathbf{V}}_{i}(t)$. The response ${\boldsymbol{\sigma}}^{i,\epsilon}(t)$
derived from ${\mathbf{V}}_{i}(t,T_{\epsilon},V_{i}^{\ast})$ is then an
approximation of the best response ${\boldsymbol{\sigma}}^{i}$. In particular,
${\boldsymbol{\sigma}}^{i,\epsilon}$ is a piecewise constant function with a
finite number of switching times $t_{h}^{\epsilon}$, $h=1,\ldots,H$, where one
of the $\sigma_{j,k}^{i,\epsilon}$ changes from 0 to 1 or from 1 to 0. As
$\epsilon$ approaches zero, ${\boldsymbol{\sigma}}^{i,\epsilon}$ converges to
${\boldsymbol{\sigma}}^{i}$ in the sense that $\lim_{\epsilon\rightarrow
0}t_{h}^{\epsilon}=t_{h}$, where $t_{h}$, for $h=1,\ldots,H$, is the finite
set of witching times of ${\boldsymbol{\sigma}}^{i}$. In short, we have
\begin{equation}
{\boldsymbol{\sigma}}^{i}=\boldsymbol{\mathcal{B}}({\mathbf{s}})=\lim
_{\epsilon\rightarrow0}{\boldsymbol{\sigma}}^{i,\epsilon}\text{.}
\label{limit}%
\end{equation}

We deal with the problem of finding a fixed point for the map
$\boldsymbol{\mathcal{B}}$ (see \eqref{fixed}) by designing a simple iterative
scheme. It is convenient to define ${\mathbf{V}}(t)=({\mathbf{V}}_{1}%
(t)$,${\mathbf{V}}_{2}(t)$,${\mathbf{V}}_{3}(t))$. The iteration starts with a
guess ${\mathbf{s}}_{0}(t)$ and then determines the best response
${\boldsymbol{\sigma}}_{0}(t)$ to $\mathbf{s}_{0}(t)$, where
${\boldsymbol{\sigma}}_{0}=\boldsymbol{\mathcal{B}}({\mathbf{s}}_{0})$ and the
associated pattern ${\mathbf{p}}_{0}(t)$. If ${\boldsymbol{\sigma}}%
_{0}(t)={\mathbf{s}}_{0}(t)$, the iteration stops and (${\mathbf{s}}%
_{0}(t),{\mathbf{p}}_{0}(t)$) is the Nash equilibrium. Otherwise, the
iteration continues: It sets a new guess $\mathbf{s}_{1}%
(t)={\boldsymbol{\sigma}}_{0}(t)$ and calculates a new path (${\mathbf{s}}%
_{1}(t),{\mathbf{p}}_{1}(t)$). In general, the iteration generates a sequence
${\mathbf{\sigma}}_{n}=\boldsymbol{\mathcal{B}}({\mathbf{s}}_{n})$. If the
sequence ${\mathbf{s}}_{n}(t)$ converges to a ${\mathbf{s}(t)}$ then the
couple (${\mathbf{s}}(t),{\mathbf{p}}(t)$) is a Nash equilibrium.

More specifically, to find a Nash equilibrium that starts from $\mathbf{p}(0)$
and converges to a Nash steady state (${\mathbf{s}}^{\ast},{\mathbf{p}}^{\ast
}$) we:

\begin{enumerate}
\item Determine the Nash steady state (${\mathbf{s}}^{\ast},{\mathbf{p}}%
^{\ast}$);

\item Chose an initial guess ${\mathbf{s}}_{0}(t)$ for ${\mathbf{s}}(t)$ (for
instance, ${\mathbf{s}}_{0}(t)={\mathbf{s}}^{\ast}$);

\item Integrate \eqref{pi}-\eqref{pizero} forward in time, with ${\mathbf{s}%
}(t)={\mathbf{s}}_{0}(t)$, until the solution ${\mathbf{p}_{0}}(t)$ is
sufficiently close to the steady state ${\mathbf{p}}^{\ast}$;

\item Compute ${\mathbf{V}}(t)$, by integrating \eqref{vidot} backward in time
(with ${\mathbf{V}}^{\ast}$ as final condition) and contemporaneously
determining ${\boldsymbol{\sigma}}_{0}(t)$ through \eqref{consiste}, assuming
${\mathbf{p}}(t)={\mathbf{p}_{0}}(t)$ and ${\mathbf{s}}(t)={\mathbf{s}}%
_{0}(t)$;

\item Set ${\mathbf{s}}_{1}(t)={\boldsymbol{\sigma}}_{0}(t)$ as the new guess
for ${\mathbf{s}}(t)$ and compute a new pattern $({\mathbf{s}}_{1}%
(t),{\mathbf{p}}_{1}(t))$ and a new best response ${\boldsymbol{\sigma}}%
_{1}(t)$;

\item Repeat steps 3 through 5 to generate a sequence (${\boldsymbol{\sigma}%
}_{n}(t)$, ${\mathbf{s}}_{n}(t),{\mathbf{p}}_{n}(t)$);

\item Stop the procedure when the difference between ${\boldsymbol{\sigma}%
}_{n}(t)$ and ${\mathbf{s}}_{n}(t)$ is smaller than a predetermined error
$\epsilon$.\footnote{We define the distance between ${\boldsymbol{\sigma}}%
_{n}$ and ${\mathbf{s}}_{n}$ as the $\max_{i=1,\ldots,K}|t_{i}-\tau_{i}|$,
where $t_{i}$ is the switching time in ${\boldsymbol{\sigma}}_{n}$ and
$\tau_{i}$ that in ${\mathbf{s}}_{n}$. The ($\mathbf{s}(t),{\mathbf{p}}(t)$)
obtained by taking $\epsilon\rightarrow0$ is a Nash equilibrium.}
\end{enumerate}

Two observations are in order. First, there are no issues of instability when
computing ${\mathbf{p}}(t)$\ and ${\mathbf{V}}(t)$: The system
\eqref{pi}-\eqref{pizero} is stable when integrated forward in time and so is
\eqref{vidot} if integrated backward in time. The iteration does not need to
converge but if it does, it necessarily converges to a fixed point of
$\boldsymbol{\mathcal{B}}$. In case of non-convergence, more refined iterative
schemes or a Newton-Raphson method could be employed. Nevertheless, in all our
numerical experiments (Section 5) this simple iterative algorithm delivered a
fixed point.

Second, the design of the algorithm presents similarities with the stable
manifold theorem for ordinary differential equations (see Appendix B for a
formal discussion). It differs, however, from the standard approaches employed
to study transitional dynamics of macroeconomic and growth models. Indeed, it
is common to compute an equilibrium pattern by integrating a system of
differential equations that describe the equilibrium conditions of the economy
backward in time, starting from a neighborhood of the steady state (see
Brunner and Strulik, 2002). This approach usually works well when the
dimension of the system is small. At high dimensions it is sometimes possible
to approximate the manifold in a neighborhood of the steady state by means of
projection methods (see McGrattan, 1999, and Mulligan and Sala-i-Martin,
1993). Nevertheless, when the dimension of the system is large, constructing
the manifold in regions away from the steady state is generally very
problematic -- a serious limitation when the choice of an initial condition
away from the steady state is an important aspect of the exercise.


\section{Numerical Experiments}

\label{numeri} This section proposes a few applications of the dynamic
analysis. First, it illustrates the transition from partial to full acceptance
of fiat money. Second, it studies the effects of changes in the rate of
seignorage. It then discusses issues of multiple equilibria related to
seignorage and to the distribution of the population across the three types.
Finally, it briefly reviews equilibria in Model B.

\subsection{Partial and Full Acceptance of Fiat Money}

Because the conditions for the full acceptance of money in a steady state are
different than those in other regions of the inventory space, an economy may
go, while converging to a full monetary equilibrium, through a phase in which
some do not accept fiat money. For one, along the transition the degree of
acceptability of a low-storage commodity may simply decline, and thus favors
the acceptability of fiat money. In addition, the cost of seignorage, given by
$V_{i,m}-V_{i,i+1}-D_{i}$, may also decline along the transition -- the
production cost, $D_{i}$, is constant over time but the difference
$V_{i,m}-V_{i,i+1}$ is not. It could be, for instance, that along the
transition the liquidity of commodity $i+1$ drops, implying a reduction of the
cost of seignorage. Panel A of fig. \ref{double_emergence} illustrates such a
scenario in the phase diagram. Initially, $\mathbf{s}=%
\begin{pmatrix}
0 & 1 & 1\\
1 & 0 & 1\\
0 & 1 & 0
\end{pmatrix}
$, that is, type 1 prefers good 2 to fiat money, and type 2 prefers good 1 to
fiat money. The economy eventually converges to a full monetary equilibrium
$\mathbf{s}=%
\begin{pmatrix}
1 & 1 & 1\\
1 & 1 & 1\\
1 & 1 & 0
\end{pmatrix}
$. Observe that along the transition also good 3 acquires the role of
commodity money, as type 1 agents switch from fundamental to speculative
strategies. Higher rates of seignorage delay the emergence of money (see
Panels B and C of fig. \ref{double_emergence}).

\subsection{A Monetary Reform}

It is a long-standing tenet of economics that inflation may create
inefficiencies because it distorts the choices of individuals. To understand
how people's behavior is affected by seignorage, consider an economy currently
on full monetary steady state equilibrium with $\mathbf{s}_{3}=(0,1,0)$.
Through a reduction of the seignorage rate, the government may pull the
economy out this equilibrium and send it to an $\mathbf{s}_{3}=(1,1,0)$ full
monetary equilibrium. In the latter equilibrium agents on average trade more
frequently and produce at a faster rate than in the former one. A $2$
percentage points reduction of $\delta_{m}$, from the initial state $M=0.3$
and $\delta_{m}=0.1$, would be sufficient to accomplish the task (see fig.
\ref{onethird}). The dynamic consequences of the shock are depicted in the
phase diagram of fig. \ref{reduction_seignorage}. Because the drop in the
seignorage rate increases the value of fiat money relative to that of all
other commodities, type 2 agents are induced sell good 1 against fiat money.
In addition, because the gap $p_{2,1}-p_{3,1}$ declines along the adjustment
to the new equilibrium, type 1 agents, in anticipation of such liquidity
change, immediately switch from fundamental to speculative strategies
($\mathbf{s}_{3}$ turns from $(0,1,0)$ to $(1,1,0)$). As a result, production
booms right after the shock and then stabilizes at a higher level relative to
that of the initial equilibrium (see fig.~\ref{reduction_seignorage}b).

\subsection{Inflation and Beliefs}

It is often argued that the effects of inflation on production depends on the
coordination of beliefs. This conjecture, in our framework, emerges in fig.
\ref{onethird} that reviews the type of equilibria associated with different
combinations of $M$ and $\delta_{m}$. The figure shows that in some regions of
the $(M,\delta_{m})$ space, two steady state equilibria exist. For instance,
when $M=0.3$, and $\delta_{m}$ is between 5 and 9 percent, the full monetary
equilibrium $\mathbf{s}=%
\begin{pmatrix}
1 & 1 & 1\\
1 & 1 & 1\\
1 & 1 & 0
\end{pmatrix}
$ coexists with the equilibrium $\mathbf{s}=%
\begin{pmatrix}
1 & 1 & 1\\
1 & 0 & 1\\
0 & 1 & 0
\end{pmatrix}
$.

This means that if initially the economy is at a (unique) steady state
equilibrium $\mathbf{s}=%
\begin{pmatrix}
1 & 1 & 1\\
1 & 0 & 1\\
0 & 1 & 0
\end{pmatrix}
$, with $M=0.3$ and $\delta_{m}=0.1$, a reduction of the seignorage rate, from
10 to, for example, 6 percent, may or may not induce type 1 agents to switch
from fundamental to speculative strategies ($\mathbf{s}_{3}$ may or may not
change from $(0,1,0)$ to $(1,1,0)$). Agents may keep their actions coordinated
on the current $\mathbf{s}_{3}=(0,1,0)$ equilibrium, in which case the policy
intervention generates only marginal changes in the economy. Conversely, they
may coordinate their actions on the $\mathbf{s}_{3}=(1,1,0)$ equilibrium -- in
which case the adjustment process would be very similar to that depicted in
fig. \ref{reduction_seignorage}.

In brief, a modest reduction of the seignorage rate associated with a dose of
optimism can be effective in stirring up production. But if agents are
unresponsive to relatively modest changes into the seignorage rate, the
government would need to implement a more radical monetary reform, and be
prepared to give up a larger share of its current seignorage revenue -- in our
example, below $\delta_{m}=0.05$ there is a unique equilibrium with
$\mathbf{s}_{3}=(1,1,0)$.

\subsection{Uneven Distribution of Types and Multiple Equilibria}

The distribution of the population across types affects the emergence of a
particular equilibrium. Wright (1995)\ established the existence of multiple
equilibria in a Model A economy without fiat money. In particular, when the
share of type 3 agents is relatively high, the speculative equilibrium
$\mathbf{s}_{3}=(1,1,0)$ coexists with one in which all three agents flip
their strategies, that is a $\mathbf{s}_{3}=(0,0,1)$ equilibrium. Seignorage
promotes the emergence of additional equilibria. Fig.
\ref{monetary_equilibria}a shows how the values of $\theta_{i}$ conditions the
emergence of a particular equilibrium, for a given stock of real balances and
of the seignorage rate. In the symmetric distribution $\theta_{i}=\frac{1}{3}%
$, under the current specification ($M=0.3$, $\delta_{m}=0.02$; see also table
\ref{parameters}), a unique full monetary $\mathbf{s}_{3}=$(1,1,0) equilibrium
exists. When the share of type 3 agents is slashed by half, that is,
$\theta_{3}=\frac{1}{6}$, $\theta_{2}=\frac{1}{3}$, and $\theta_{1}=\frac
{1}{2}$, the equilibrium is characterized by $\mathbf{s}_{3}=$(1,1,0) and
still full acceptance of money. Conversely, a reduction of the share type 2
agents means that the original full monetary $\mathbf{s}=%
\begin{pmatrix}
1 & 1 & 1\\
1 & 1 & 1\\
1 & 1 & 0
\end{pmatrix}
$ equilibrium is coupled with $\mathbf{s}=%
\begin{pmatrix}
1 & 1 & 1\\
1 & 0 & 1\\
1 & 1 & 0
\end{pmatrix}
$. A unique equilibrium with partial acceptability of fiat money is observed
when $\theta_{1}$ is high and $\theta_{3}$ is low -- a situation in which type
2 easily trades 1 for 2.

As higher seignorage rates, regions supported by a partial acceptability
$\mathbf{s}_{2}=(1,0,1)$ expand. For instance, a comparison of fig.
\ref{monetary_equilibria}a and fig. \ref{monetary_equilibria}b reveals that
when seignorage rate goes from $2$ to $10$ per cent, the full monetary
fundamental equilibrium disappears, and the overlapping between the equilibria
$\mathbf{s}=%
\begin{pmatrix}
1 & 1 & 1\\
1 & 0 & 1\\
1 & 1 & 0
\end{pmatrix}
$ and $\mathbf{s}=%
\begin{pmatrix}
1 & 1 & 1\\
1 & 1 & 1\\
1 & 1 & 0
\end{pmatrix}
$ is more commonly observed.\medskip

\noindent\textit{Multiple Equilibria}. The presence of multiple monetary
steady states does not necessarily imply the existence of multiple Nash
equilibria. In principle, it could be that once the initial condition is
specified, the pattern converges to one and only one steady state. Fig.
\ref{multiple_equilibria} clarifies, however, that in our environment there is
multiplicity: Two economies with the same set of parameters and the same
initial condition coordinate on different steady state equilibria.

\subsection{Uneven Distribution of Types and Dynamics in Model B}

This section briefly discusses the acceptability of fiat money in a Model B
economy, for different values of $\theta_{i}$. The effects of a rise in the
seignorage rate can be learned by comparing the equilibria in fig.
\ref{overview_equilibria_B}b where $\delta_{m}=0.1$ and those in fig.
\ref{overview_equilibria_B}a where $\delta_{m}=0.02$. In the low-inflation
economy larger regions of full monetary equilibria overlaps with similar
equilibria in which $s_{3,m}^{1}=0.$ In the space just below the 45-degree
line, there are only full monetary fundamental equilibria $\mathbf{s}%
_{3}=(1,0,1)$ because the scarcity of type 1 agents reduces the liquidity
value of good 3 -- there are too few middle-men that bring good 3 from type 2
to type 3 agents. Indeed, good 3 is dominated by fiat money even with a
seignorage rate three times larger than $c_{3}$. For a more balanced
distribution of the population, however, at a sufficiently high seignorage
rates, equilibria with partial acceptability of money ($s_{3,m}^{1}=0$) become unique.

As with model A, the liquidity conditions and the cost of seignorage change
along the dynamics, implying that for a given specification of the economy,
the fraction of the population that accepts money changes along the transition
to the Nash steady state equilibrium. Fig. \ref{emergencemodelB} shows a
particular scenario in which as the economy converges to the $\mathbf{s}%
_{3}=(1,0,1)$ full monetary steady state equilibrium, type 3 agents switch
their strategies with respect to fiat money when holding good 2. Along the
transition both good 2 and fiat money become more valuable (see middle plot of
fig. \ref{emergencemodelB}), as their liquidity improves, but the value of
fiat money increases more rapidly and eventually catches up with the value
good 2.

\section{Welfare}

One standard question of monetary economics is whether the acceptance of fiat
money improves the allocation of resources and stimulates production. The
presence of matching frictions and the assumption that agents incur a cost in
holding commodities gives fiat money a potential positive role. Nevertheless,
it also comes with costs at the individual's and society's level. At the
individual level, it looms the risk of confiscation. At the society level,
fiat money reduces the availability of commodities -- money chases away
consumption goods. To explores the welfare implications of introducing fiat
money and of altering seignorage rates, we use, as KW, a utilitarian welfare
criterion -- given the highly symmetric type of environment it is unlikely to
observe a Pareto improvement from any given state. The payoff of a type $i$
agent is calculated as a weighted average of $V_{i,j}$:
\[
W_{i}(t)=\frac{1}{\theta_{i}}(p_{i,i+1}V_{i,i+1}+p_{i,i+2}V_{i,i+2}%
+p_{i,m}V_{i,m}).
\]
Therefore, the welfare of the whole society is simply the average of the three
groups' payoffs:
\[
W(t)=\sum_{i}\theta_{i}W_{i}(t)\text{.}%
\]

As mentioned in the introduction the government derives utility from consuming
goods. These are purchased through the seignorage tax $\delta_{m}M$. One may
argue that the government cares also about the welfare of the population. This
could reflect a genuine interest in the society's well-being, or more simply
the desire of maintaining the population's electoral support. The government
welfare function is then
\[
W^{G}(t)=(1-\lambda)Q(t)+\lambda W(t),
\]
where
\[
Q(t)=M\int_{t}^{+\infty}e^{-\rho^{G}(\tau-t)}\delta_{m}(\tau)d\tau,
\]
and where $\lambda$ weighs the society's welfare in the government's objective
function. For the sake of the illustration, we consider an altruistic
government ($\lambda=1$) that wants to know how the level of $M$ or the rate
of seignorage affects the population. \medskip

\noindent\textit{Real Balances.} Fig. \ref{welfareM}a shows the welfare levels
on steady state for different levels of fiat money and zero seignorage. When
the stock of fiat money is relatively large, any further increase tends to
make people, on average, worse-off because the chase-away-good effect largely
dominates. Observe that a change in $M$ can induce $W_{1},$ $W_{2}$ and
$W_{3}$ to move in different directions, indicating a conflict of interest in
the society about the desirable level of real balances. Type 3 individuals,
for instance, prefer a lower level of fiat money than the other two groups. As
they carry the low storage cost good, relative to the other two groups, they
are more preoccupied by the displacement of commodities caused by a further
increase in fiat money than pleased by the savings in storage costs. \medskip

\noindent\textit{Seigniorage.} When the government confiscates fiat money, it
clearly reduces the welfare of the targeted individuals by $V_{i,m}%
-V_{i,i+1}-D_{i}$. But it also alters the odds that an individual is able to
exchange his commodity against money. While a private agent in a match may
refuse to buy, when a government agent carrying money meets a private agent,
there is always an exchange. Hence, seignorage can act as a stimulus to
production. But with seignorage comes also some income redistribution. First,
as noted earlier, the cost of seignorage differs across types: Because a type
3 who relinquishes fiat money produces the good with the lowest storage cost,
his burden of seignorage is lighter than that of the other two types. Second,
because real balances are in general not held in equal proportions across the
three groups, the probability of being hit by seignorage differs across types.
Similarly, the probability that the government purchases a commodity depends
also on how commodities are distributed across types. Fig. \ref{welfareM}b,
for instance, shows that over a certain range of seignorage, an increase in
seignorage causes a decline in $W_{1}$ and $W_{2}$ and an increase in $W_{3}$.
Interestingly, the figure also says that an increase in seignorage can boost
everybody's welfare $W_{i}$ when starting from low levels of seignorage.

\section{Further Research}

The set up of the problem (Section 2) and the procedure to find Nash
equilibria (Section 4) are valid for a more general search model with $N$
goods, and $N$ types of agents, as, for instance, in Aiyagari and Wallace
(1991, 1992) (see Appendix A). The analysis could also be adapted to allow for
multiple holdings, as in Molico (2006), Lagos and Rocheteau (2008), and Chiu
and Molico (2010), and to study the dynamics of indivisible-asset models in
which heterogeneity is an essential ingredient, such as studies of the
middlemen by Rubinstein and Wolinsky (1987), international currency by
Matsuyama et al. (1993), banking by Cavalcanti and Wallace (1999), and
over-the-counter financial markets by Duffie et al. (2005).

\bigskip\noindent{\LARGE References}\bigskip

\noindent Aiyagari, S. and N. Wallace (1991) Existence of steady states with
positive consumption in the Kiyotaki-Wright model. \textit{Review of Economic
Studies}, 58, 901-16.\medskip

\noindent Aiyagari, S. and N. Wallace (1992) Fiat money in the Kiyotaki-Wright
model. \textit{Economic Theory} 2, 447-64.\medskip

\noindent Araujo, L. (2004) Social norms and money. \textit{Journal of
Monetary Economics}, 51, 241-256.\medskip

\noindent Araujo, L., Camargo B., Minetti R, and Puzzello D. (2012) The
essentiality of money in environment with centralized trade. \textit{Journal
of Monetary Economics}, 59, 612-621.\medskip

\noindent Ba\c{s}\c{c}i, E. (1999) Learning by imitation. \textit{Journal of
Economic Dynamics and Control}, 23(9-10), 1569-1585.\medskip

\noindent Baumol, W. (1952) The transactions demand for cash: An inventory
theoretic approach. \textit{The Quarterly Journal of Economics}, 66(4),
545-556.\medskip

\noindent Brown, P. M. (1996) Experimental evidence on money as a medium of
exchange. \textit{Journal of Economic Dynamics and Control}, 20(4),
583-600.\medskip

\noindent Brunner, M. and H. Strulik (2002) Solution of perfect foresight
saddlepoint problems: A simple method and applications. \textit{Journal of
Economic Dynamics and Control}, 26(5), 737-53.\medskip

\noindent Cavalcanti, R. O. and N. Wallace (1999) Inside and outside money as
alternative media of exchange. \textit{Journal of Money, Credit and Banking},
31 (3), 443-457.\medskip

\noindent Chiu, J. and M. Molico (2010) Liquidity, redistribution, and the
welfare cost of inflation. \textit{Journal of Monetary Economics}, 57(4),
428-38.\medskip

\noindent Deviatov, A. and Wallace, N. (2014) Optimal inflation in a model of
inside money. \textit{Review of Economic Dynamics}, 17, 287--293.\medskip

\noindent Diamond, P. (1982) Aggregate demand management in search
equilibrium. \textit{Journal of Political Economy}, 90, 881-94.\medskip

\noindent Duffie, D., N. G\^{a}rleanu and L. Pederson (2005) Over-the-counter
markets.\textit{\ Econometrica} 73, 1815-1847.\medskip

\noindent Duffy, J., and Jack Ochs (1999) Emergence of money as a medium of
exchange: An experimental study. \textit{American Economic Review}, 89 (4),
847-877.\medskip

\noindent Duffy, J., and Jack Ochs (2002) Intrinsically worthless objects as
media of exchange: Experimental evidence. \textit{International Economic
Review}, 43 (3), 637-674.\medskip

\noindent Goetzmann, W. N. (2016) \textit{Money changes everything: How
finance made civilization possible}. Princeton University Press.\medskip

\noindent Kehoe, M. J., N. Kiyotaki, and R. Wright (1993) More on money as a
medium of exchange. \textit{Economic Theory} 3, 297-314.\medskip

\noindent Kiyotaki, N., and R. Wright (1989) On money as a medium of exchange.
\textit{Journal of Political Economy} 97, 927--954.\medskip

\noindent Iacopetta, M. (2018) The emergence of money: A\ dynamic analysis.
\textit{Macroeconomic Dynamics, }forthcoming.\medskip

\noindent Jovanovic, B. (1982). Inflation and welfare in the steady state.
\textit{Journal of Political Economy}, 90, 561-77. \medskip

\noindent Lagos, R. and G. Rocheteau, (2009), Liquidity in markets with search
frictions.\ \textit{Econometrica}, 77(2), 403-26.\medskip

\noindent Lagos, R., G. Rocheteau, and R. Wright (2017) Liquidity: A new
monetarist perspective. \textit{Journal of Economic Literature} 55,
371-440.\medskip

\noindent Li, V., (1994) Inventory accumulation in a search-based monetary
economy. \textit{Journal of Monetary Economics}, 34, 511-36.\medskip

\noindent Li, V., (1995) The optimal taxation of fiat money in search
equilibrium. \textit{International Economic Review}, 36(4), 927-942.\medskip

\noindent Luo, G. (1999) The evolution of money as a medium of
exchange.\ \textit{Journal of Economic Dynamics and Control}, 23,
415-58.\medskip

\noindent Mankiw, N. G., (2006) \textit{Principles of Economics}, 4th ed.,
Cengage Learning.\medskip

\noindent Marimon, R., E. McGrattan, and T. Sargent (1990) Money as a medium
of exchange in an economy with artificially intelligent
agents.\ \textit{Journal of Economic Dynamics and Control,} 14,
329-73.\medskip

\noindent Matsuyama, K., N., Kiyotaki, and A. Matsui (1993) Toward a theory of
international currency. \textit{Review of Economic Studies}, 60(2),
283-307.\medskip

\noindent McGrattan, E.R., (1999) Application of weighted residual methods to
dynamic economic models. In: Marimon, R., Scott, A. (Eds.),
\textit{Computational Methods For The Study Of Dynamic Economies}. Oxford
University Press, Oxford, 114-142.\medskip

\noindent Molico, M. (2006) The distribution of money and prices in search
equilibrium. \textit{International Economic Review}, 47(3), 701-22.\medskip

\noindent Mulligan, C. B. and X. Sala-i-Martin (1993) Transitional dynamics in
two-sector models of endogenous growth, \textit{The Quarterly Journal of
Economics}, 108(3), 739--773.\medskip

\noindent Nash, J. (1950) Equilibrium points in $n$-person games.
\textit{Proceedings of the National Academy of Sciences} 36(1): 48-49.\medskip

\noindent Oberfield, E. and T. Trachter (2012) Commodity money with frequent
search. \textit{Journal of Economic Theory}, 147, 2332-56.\medskip

\noindent Quiggin, A. H. (1949) \textit{A\ survey of primitive money: the
beginning of currency}. Barnes and Noble, New York.\medskip

\noindent Renero, J.M. (1998) Unstable and stable steady-states in the
Kiyotaki-Wright model. \textit{Economic Theory} 11, 275-294.\medskip

\noindent Robinson, C., R., (1995) Dynamical systems, stability, symbolic
dynamics, and chaos.\ \textit{Studies in Advanced Mathematics}, CRC Press,
Boca Raton.\medskip

\noindent Rubinstein, A. and A. Wolinsky (1987) Middlemen. \textit{Quarterly
Journal of Economics}, 102, 581-94.\medskip

\noindent Shevchenko, A. and R. Wright (2004) A simple search model of money
with heterogeneous agents and partial acceptability. \textit{Economic Theory},
24(4), 877-885.\medskip

\noindent Sethi, R. (1999) Evolutionary stability and media of
exchange.\textit{\ Journal of Economic Behavior and Organization}, 40,
233-54.\medskip

\noindent Tobin, J (1956) The interest elasticity of transactions demand for
cash. \textit{The Review of Economics and Statistics}, 38(3), 241--47.\medskip

\noindent Wright, R. (1995) Search, evolution, and money. \textit{Journal of
Economic Dynamics and Control}, 19, 181-206.

\newpage

\appendix

\section{Proofs and derivations}

\label{app:proofs}

This Appendix contains the proofs of Propositions 1 to 5. From the proofs of
Propositions 1, 2, 4, and 5 it will emerge that their statements apply to a
more general setting with $N$ objects and $N$ types of agents. Specifically,
the size of the matrix $\mathcal{A}^{i}$, defined in \eqref{AA} in this
Appendix, can be augmented to consider more objects, and the index $i$,
associated with types of individuals, can run up to an $N>3$.

\begin{proof}
[Proof of Proposition \ref{Vidot}]Differentiation of \eqref{Vf} with respect
to $t$ yields
\begin{equation}
\dot{V}_{i,j}(t)=-v_{i,j}({\mathbf{p}}(t))+\rho V_{i,j}(t)+\int_{t}^{\infty
}e^{-\rho(\tau-t)}\sum_{l}\dot{\pi}_{l,j}^{i}(\tau,t)v_{i,l}({\mathbf{p}}%
(\tau))d\tau\text{,} \label{Vfdot}%
\end{equation}
where
\begin{equation}
v_{i,j}({\mathbf{p}})=\sum_{i^{\prime}}p_{i^{\prime},i}s_{j,i}^{i^{\prime}%
}u_{i}-c_{j} \label{vv}%
\end{equation}
(when $j=m$, $c_{m}$ stands for $\delta_{m}D$) is the expected utility from
consumption, net of storage cost, for an agent of type $i$ with good $j$, and
where $\dot{\pi}_{k,j}^{i}(\tau,t)\equiv\frac{d}{dt}\pi_{k,j}^{i}(\tau,t)$. To
derive $\frac{d}{dt}\pi_{k,j}^{i}(\tau,t)$, first observe that expressions
similar to \eqref{pi}-\eqref{pizero} imply that%

\begin{align}
\frac{d}{d\tau}\pi_{i+1,j}^{i}(\tau,t)=  &  \alpha\left\{  \sum_{i^{\prime}%
}\sum_{k}\pi_{k,j}^{i}p_{i^{\prime},i+1}\sigma_{k,i+1}^{i}s_{i+1,k}%
^{i^{\prime}}+\sum_{i^{\prime}}\pi_{k,j}^{i}p_{i^{\prime},i}s_{i,k}%
^{i^{\prime}}-\sum_{i^{\prime}}\sum_{k}\pi_{i+1,j}^{i}p_{i^{\prime},k}%
\sigma_{i+1,k}^{i}s_{k,i+1}^{i^{\prime}}\right\} \nonumber\\
-  &  \delta_{g}\pi_{i+1,j}^{i}+\delta_{m}\pi_{m,j}^{i}\label{ppi}\\
\frac{d}{d\tau}\pi_{i+2,j}^{i}(\tau,t)=  &  \alpha\left\{  \sum_{i^{\prime}%
}\sum_{k}\pi_{k,j}^{i}p_{i^{\prime},i+2}\sigma_{k,i+2}^{i}s_{i+2,k}%
^{i^{\prime}}-\sum_{i^{\prime}}\sum_{k}\pi_{i+2,j}^{i}p_{i^{\prime},k}%
\sigma_{i+2,k}^{i}s_{k,i+2}^{i^{\prime}}\right\}  -\delta_{g}\pi_{i+2,j}%
^{i}\label{ppitwo}\\
\frac{d}{d\tau}\pi_{m,j}^{i}(\tau,t)=  &  \alpha\left\{  \sum_{i^{\prime}}%
\sum_{k}\pi_{k,j}^{i}p_{i^{\prime},m}\sigma_{k,m}^{i}s_{j,m}^{i^{\prime}}%
-\sum_{i^{\prime}}\sum_{k}\pi_{m,j}^{i}p_{i^{\prime},k}\sigma_{m,k}^{i}%
s_{k,m}^{i^{\prime}}\right\} \nonumber\\
-  &  \delta_{m}\pi_{m,j}^{i}+\delta_{g}(\pi_{i+1,j}^{i}+\pi_{i+2,j}^{i})
\label{ppizero}%
\end{align}
with initial condition $\pi_{k,j}^{i}(t,t)=1$ if $k=j$, and 0 otherwise. For
each $i=1,2,3$, we consider the $3\times3$ matrix $\mathcal{A}^{i}=\left\{
\mathcal{A}_{j,k}^{i}\right\}  _{j,k=i+1,i+2,m}$ defined as:
\begin{align}
\mathcal{A}_{i+1,i+1}^{i}=  &  -\alpha\sum_{i^{\prime}}\sum_{k\not =%
i+1}p_{i^{\prime},k}\sigma_{i+1,k}^{i}s_{k,i+1}^{i^{\prime}}-\delta
_{g}\nonumber\\
\mathcal{A}_{i+1,i+2}^{i}=  &  \alpha\sum_{i^{\prime}}p_{i^{\prime},i+2}%
\sigma_{i+1,i+2}^{i}s_{i+2,i+1}^{i^{\prime}}\nonumber\\
\mathcal{A}_{i+1,m}^{i}=  &  \alpha\sum_{i^{\prime}}p_{i^{\prime},m}%
\sigma_{i+1,m}^{i}s_{m,i+1}^{i^{\prime}}+\delta_{g}\nonumber\\
&  \hbox{}\nonumber\\
\mathcal{A}_{i+2,i+1}^{i}=  &  \alpha\sum_{i^{\prime}}p_{i^{\prime},i+1}%
\sigma_{i+2,i+1}^{i}s_{i+1,i+2}^{i^{\prime}}+\alpha\sum_{i^{\prime}%
}p_{i^{\prime},i}s_{i,i+2}^{i^{\prime}}\nonumber\\
\mathcal{A}_{i+2,i+2}^{i}=  &  -\alpha\sum_{i^{\prime}}\sum_{k\not =%
i+2}p_{i^{\prime},k}\sigma_{i+2,k}^{i}s_{k,i+2}^{i^{\prime}}-\alpha
\sum_{i^{\prime}}p_{i^{\prime},i}s_{i,i+2}^{i^{\prime}}-\delta_{g}\label{AA}\\
\mathcal{A}_{i+2,m}^{i}=  &  \alpha\sum_{i^{\prime}}p_{i^{\prime},m}%
\sigma_{i+2,m}^{i}s_{m,i+2}^{i^{\prime}}+\delta_{g}\nonumber\\
&  \hbox{}\nonumber\\
\mathcal{A}_{m,i+1}^{i}=  &  \alpha\sum_{i^{\prime}}p_{i^{\prime},i+1}%
\sigma_{m,i+1}^{i}s_{i+1,m}^{i^{\prime}}+\alpha\sum_{i^{\prime}}p_{i^{\prime
},i}s_{i,m}^{i^{\prime}}+\delta_{m}\nonumber\\
\mathcal{A}_{m,i+2}^{i}=  &  \alpha\sum_{i^{\prime}}p_{i^{\prime},i+2}%
\sigma_{m,i+2}^{i}s_{i+2,m}^{i^{\prime}}\nonumber\\
\mathcal{A}_{m,m}^{i}=  &  -\alpha\sum_{i^{\prime}}\sum_{k\not =m}%
p_{i^{\prime},k}\sigma_{m,k}^{i}s_{k,m}^{i^{\prime}}-\alpha\sum_{i^{\prime}%
}p_{i^{\prime},i}s_{i,m}^{i^{\prime}}-\delta_{m}\,.\nonumber
\end{align}
The expressions in \eqref{ppi}-\eqref{ppizero} then simplify to
\[
\frac{d}{d\tau}\pi_{k,j}^{i}(\tau,t)=\sum_{l}\mathcal{A}_{l,k}^{i}(\tau
)\pi_{l,j}^{i}(\tau,t).
\]
Observe that the matrix ${\mathcal{A}}^{i}$ satisfies ${\mathcal{A}}_{j,k}%
^{i}\geq0$ for $j\not =k$ and
\begin{equation}
\sum_{l}\mathcal{A}_{j,l}^{i}=0\,, \label{PF}%
\end{equation}
for every $j$. These properties are used in proving Proposition \ref{best} and
Lemma \ref{PFP} (see below).

Calling $\Pi^{i}(\tau,t)$ the $3\times3$ matrix with entries $(\Pi^{i}%
)_{k,j}(\tau,t)=\pi_{k,j}^{i}(\tau,t)$, the above equation can be written as
\begin{equation}
\frac{d}{d\tau}\Pi^{i}(\tau,t)=\mathcal{A}^{i}(\tau)^{T}\Pi^{i}(\tau,t)
\label{Pidot}%
\end{equation}
where $\mathcal{A}^{i}(t)^{T}$ is the transpose of $\mathcal{A}^{i}(t)$. What
is needed to compute the evolution of ${V}_{i,j}(t)$, however, is the
derivative of $\Pi^{i}(\tau,t)$ with respect to $t$ rather than with respect
to $\tau$. Note, however, that from \eqref{Pidot} it follows that
\[
\Pi^{i}(\tau,t-dt)=\Pi^{i}(\tau,t)\Pi^{i}(t,t-dt)=\Pi^{i}(\tau,t)\left(
1+dt\mathcal{A}^{i}(t)^{T}\right)  .
\]
Consequently,
\[
\frac{d}{dt}\Pi^{i}(\tau,t)=-\Pi^{i}(\tau,t)\mathcal{A}^{i}(t)^{T},
\]
that in extended form becomes
\begin{equation}
\frac{d}{dt}\pi_{k,j}^{i}(\tau,t)=-\sum_{l}\mathcal{A}_{j,l}^{i}(t)\pi
_{k,l}^{i}(\tau,t)\text{.} \label{ppit}%
\end{equation}
By inserting \eqref{ppit} in \ref{Vfdot} we obtain
\begin{equation}
\dot{\mathbf{V}}_{i}=\rho{\mathbf{V}}_{i}-\mathcal{A}^{i}(t){\mathbf{V}}%
_{i}-{\mathbf{v}}^{i}(t) \label{vvidot}%
\end{equation}
where ${\mathbf{v}}^{i}(t)=(v_{i,i+1}({\mathbf{p}}(t)),v_{i,i+2}({\mathbf{p}%
}(t)),v_{i,m}({\mathbf{p}}(t)))$. Using the definition of the matrix
$\mathcal{A}^{i}$ in \eqref{AA} and of $v_{i,j}$ in \eqref{vv} we get, for
$j=i+1$ or $i+2$,
\begin{align*}
\dot{V}_{i,j}=  &  \rho V_{i,j}-\alpha\left\{  \sum_{i^{\prime}}\sum
_{k\not =i}p_{i^{\prime},k}\sigma_{j,k}^{i}s_{k,j}^{i^{\prime}}V_{i,k}%
+\sum_{i^{\prime}}p_{i^{\prime},i}s_{i,j}^{i^{\prime}}V_{i,i+1}-\sum
_{i^{\prime},k}p_{i^{\prime},k}\sigma_{j,k}^{i}s_{k,j}^{i^{\prime}}%
V_{i,j}\right\} \\
-  &  \delta_{g}(V_{i,m}-V_{i,j})-\sum_{i^{\prime}}p_{i^{\prime},i}%
s_{j,i}^{i^{\prime}}u_{i}+c_{j}\text{.}%
\end{align*}
Rearranging terms we get:%
\begin{align*}
\dot{V}_{i,j}=  &  (\rho+\delta_{g}+\alpha)V_{i,j}-\alpha\left\{
\sum_{i^{\prime}}\sum_{k\not =i}p_{i^{\prime},k}\sigma_{j,k}^{i}%
s_{k,j}^{i^{\prime}}V_{i,k}+\sum_{i^{\prime}}p_{i^{\prime},i}s_{i,j}%
^{i^{\prime}}(V_{i,i+1}+u_{i})+\right. \\
+  &  \left.  \sum_{i^{\prime},k}p_{i^{\prime},k}(1-\sigma_{j,k}^{i}%
s_{k,j}^{i^{\prime}})V_{i,j}\right\}  -\delta_{g}V_{i,m}+c_{j}\text{.}%
\end{align*}
This is the expression in \eqref{vidot} when $j=i+1$ or $j+2$. Similarly, when
$j=m$
\begin{align*}
\dot{V}_{i,m}=  &  \rho V_{i,j}-\alpha\left\{  \sum_{i^{\prime}}\sum
_{k\not =i}p_{i^{\prime},k}\sigma_{m,k}^{i}s_{k,m}^{i^{\prime}}V_{i,k}%
+\sum_{i^{\prime}}p_{i^{\prime},i}s_{i,m}^{i^{\prime}}V_{i,i+1}+\sum
_{i^{\prime},k}\sigma_{m,k}^{i}s_{k,m}^{i^{\prime}}V_{i,j}\right\} \\
-  &  \delta_{m}(V_{i,i+1}-V_{i,m})-\sum_{i^{\prime}}p_{i^{\prime},i}%
s_{i,m}^{i^{\prime}}u_{i}-\delta_{m}D_{i}=\\
=  &  (\rho+\delta_{g}+\alpha)V_{i,m}-\alpha\left\{  \sum_{i^{\prime}}%
\sum_{k\not =i}p_{i^{\prime},k}\sigma_{m,k}^{i}s_{k,m}^{i^{\prime}}%
V_{i,k}+\sum_{i^{\prime}}p_{i^{\prime},i}s_{i,m}^{i^{\prime}}(V_{i,i+1}%
+u_{i})+\right. \\
+  &  \left.  \sum_{i^{\prime},k}(1-\sigma_{m,k}^{i}s_{k,m}^{i^{\prime}%
})V_{i,j}\right\}  +\delta_{m}(V_{i,i+1}-D_{i})\text{.}%
\end{align*}
This is expression in \eqref{vidot} for $j=m$.
\end{proof}

\bigskip

\begin{proof}
[Proof of Proposition \ref{best}]Let
\begin{equation}
{\mathbf{Z}}(t)=\Phi(t,T){\mathbf{Z}}_{0} \label{fond}%
\end{equation}
be the solution of the initial value problem
\[%
\begin{cases}
\dot{\mathbf{Z}}=-\mathcal{A}^{i}(t){\mathbf{Z}}\\
{\mathbf{Z}}(T)={\mathbf{Z}}_{0}\ .
\end{cases}
\]
where $\mathcal{A}^{i}(t)$ is the matrix defined in \eqref{AA}. Consider now a
variation of $\sigma_{i+1,i+2}^{i}(t)$ of the form
\[
\tilde{\sigma}_{i+1,i+2}^{i}(t)=\sigma_{i+1,i+2}^{i}(t)+\delta\eta(t).
\]
where $\eta(t)=0$ for $t>T$ and $\delta$ is a parameter. Differentiating
\eqref{vvidot} with respect to $\delta$ delivers
\[
\partial_{\delta}\dot{\mathbf{V}}^{i}(t)=\rho\partial_{\delta}{\mathbf{V}}%
_{i}-\mathcal{A}^{i}(t)\partial_{\delta}{\mathbf{V}}_{i}-\partial_{\delta
}\mathcal{A}^{i}(t){\mathbf{V}}_{i}.
\]
Using the property that $\partial_{\delta}{\mathbf{V}}_{i}(t)=0$ if $t>T$, the
Duhamel principle gives
\[
\partial_{\delta}{\mathbf{V}}_{i}(t)=\int_{t}^{T}e^{-\rho(\tau-t)}\Phi
(t,\tau)\partial_{\delta}\mathcal{A}^{i}(\tau){\mathbf{V}}_{i}(\tau)d\tau.
\]
From \eqref{AA} it follows that
\[
\partial_{\delta}\mathcal{A}^{i}(\tau){\mathbf{V}}_{i}(\tau)\bigl|_{\delta
=0}=-\alpha\eta(\tau)\Delta_{i+1,i+2}^{i}(\tau)%
\begin{pmatrix}
\sum_{i^{\prime}}p_{i^{\prime},i+2}(\tau)s_{i+2,i+1}^{i^{\prime}}(\tau)\\
0\\
0
\end{pmatrix}
.
\]
Therefore,
\[
\partial_{\delta}{\mathbf{V}}_{i}(t)\bigl|_{\delta=0}=\int_{t}^{\infty}%
\eta(\tau)\Delta_{i+1,i+2}^{i}(\tau){\mathbf{U}}(t,\tau)d\tau\text{,}%
\]
where
\[
{\mathbf{U}}(t,\tau)=e^{-\rho(\tau-t)}\Phi(t,\tau)%
\begin{pmatrix}
\sum_{i^{\prime}}p_{i^{\prime},i+2}(\tau)s_{i+2,i+1}^{i^{\prime}}(\tau)\\
0\\
0
\end{pmatrix}
\text{.}%
\]
Equations \eqref{AA} and \eqref{PF} imply that $\Phi(t,\tau)_{i,j}\geq0$ and
$\Phi(t,\tau)_{i,i}>0$. Moreover, because $p_{i+1,i+2}(\tau)>0$ and
$s_{i+2,i+1}^{i+1}(\tau)=1$, it follows that ${\mathbf{U}}(t,\tau)_{i+1}>0$
and that ${\mathbf{U}}(t,\tau)_{j}\geq0$ for $j=i+2,m$.

Clearly, if $\Delta_{i+1,i+2}^{i}(\tau)\not =0$, the contribution of
$\eta(\tau)$ to the variation $\partial_{\delta}{\mathbf{V}}_{i}(t)>0$ is
different than zero and there is no critical value for $\sigma_{i+1.i+2}%
^{1}(\tau)\in(0,1)$. We can then conclude ${\mathbf{V}}_{i}(t)$ reaches a
maximum at a boundary, i.e. $\sigma_{i+1.i+2}^{1}(\tau)\in\{0,1\}$, and that
\begin{equation}
\sigma_{i+1,i+2}^{i}(\tau)=%
\begin{cases}
1 & \Delta_{i+1,i+2}^{i}(\tau)<0\\
0 & \Delta_{i+1,i+2}^{i}(\tau)>0\,.
\end{cases}
\end{equation}
Finally, as already observed in footnote 8 after Proposition 2, since
$\Delta_{i+1,i+2} ^{i}(t)=0$ for a finite set of switching times, the value of
$\sigma_{i+1,i+2}^{i}(t)$ on such a set does not affect ${\mathbf{V}}_{i}(t)$.
Similar observations hold for $\sigma_{j,k}^{i}$ with $(j,k)\not =(i+1,i+2)$.
This concludes the proof of \eqref{consiste}.
\end{proof}

\begin{proof}
[Proof of Proposition \ref{pro:stab-no-money}]When $M=\delta_{m}=0$ the system
\eqref{pi}-\eqref{pizero} reduces to
\begin{align}
\dot{p}_{1,2}  &  =\alpha\{p_{1,3}[p_{2,1}(1-s_{3,1}^{2})+p_{3,1}%
+p_{3,2}(1-s_{2,3}^{1})]-p_{1,2}p_{2,3}s_{2,3}^{1}\},\label{p1}\\
\dot{p}_{2,3}  &  =\alpha\{p_{2,1}[p_{3,2}(1-s_{1,2}^{3})+p_{1,2}%
+p_{1,3}(1-s_{3,1}^{2})]-p_{2,3}p_{3,1}s_{3,1}^{2}\},\label{p2}\\
\dot{p}_{3,1}  &  =\alpha\{p_{3,2}[p_{1,3}(1-s_{2,3}^{1})+p_{2,3}%
+p_{2,1}(1-s_{1,2}^{3})]-p_{3,1}p_{1,2}s_{1,2}^{3}\}. \label{p3}%
\end{align}
For a given profile of strategies, we need to check that the system
\eqref{p1}-\eqref{p3} has a unique globally attractive steady state.

\textit{Case (0,1,0).} Eq. \eqref{p3} reduces to $\dot{p}_{3,1}=\alpha
(\theta_{3}-p_{3,1})(p_{1,3}+\theta_{2})$, implying that the plane
$p_{3,1}=\theta_{3}$ is globally attractive. On this plane \eqref{p1} reduces
to $\dot{p}_{1,2}=\alpha(\theta_{1}-p_{1,2})\theta_{3}$, implying that the
line $p_{1,2}=\theta_{1}$, $p_{3,1}=\theta_{3}$ is globally attractive. On
this lines \eqref{p2} becomes
\[
\dot{p}_{2,3}=(\theta_{2}-p_{2,3})\theta_{1}-p_{2,3}\theta_{3},
\]
which clearly admits a unique globally attractive fixed point for
$p_{2,3}=\frac{\theta_{2}\theta_{1}}{\theta_{3}+\theta_{1}}$. In brief, under
the profile of strategies (0,1,0), the distribution of inventories converges
globally to the stationary distribution ($\theta_{1},\frac{\theta_{2}%
\theta_{1}}{\theta_{3}+\theta_{1}},\theta_{3}$). For $\theta_{i}=\frac{1}{3}$
this reduces to $\frac{1}{3}(1,\frac{1}{2},1).$ \bigskip

\textit{Case (1,1,0)}. Eq. \ref{p3} becomes $\dot{p}_{3,1}=\alpha(\theta
_{3}-p_{3,1})\theta_{2}$. Consequently, the plane $\theta_{3}=p_{3,1}$ is
globally attractive. The Jacobian, $J$, of the system of the two remaining
eqs. \ref{p1}\ and \ref{p2} on the plane $\theta_{3}=p_{3,1}$ is
\[
J=\alpha\left[
\begin{tabular}
[c]{ll}%
$-(\theta_{3}+p_{2,3})$ & $-p_{1,2}$\\
$(\theta_{2}-p_{2,3})$ & $-(\theta_{3}+p_{1,2})$%
\end{tabular}
\ \ \ \ \right]  .
\]
The determinant of $J$ is always strictly positive, and its trace is always
strictly negative; therefore, both eigenvalues are negative for every relevant
value of $p_{1,2}$ and $p_{2,3}$. Moreover the set $[0,\theta_{1}%
]\times\lbrack0,\theta_{2}]$ is clearly positively invariant and compact. It
thus follows easily from the Poincar\'{e}-Bendixon theorem that the system as
a unique globally attractive fixed point.

To find the stationary distribution, set \ref{p1} and \ref{p2} to zero. They
yield $p_{1,2}=\theta_{1}\theta_{3}/(\theta_{3}+p_{2,3})$ and $p_{1,2}%
=\frac{\theta_{3}}{\theta_{2}/p_{2,3}-1}$, respectively. The two lines
necessarily cross once and only once for $p_{2,3}$ in the interval
$[0$,$\theta_{3}]$. The fixed point is $(p_{1,2}^{\ast},$ $p_{2,3}^{\ast
},\theta_{3})$,where $p_{2,3}^{\ast}=\frac{1}{2}[-(\theta_{1}+\theta
_{3})+\sqrt{(\theta_{1}+\theta_{3})^{2}+4\theta_{1}\theta_{2}}]$ and
$p_{2,1}^{\ast}=\frac{\theta_{1}\theta_{3}}{\theta_{3}+p_{2,3}^{\ast}}$. When
$\theta_{i}=\frac{1}{3}$ the fixed point then is $\frac{1}{3}(\frac{\sqrt{2}%
}{2},\sqrt{2}-1,1)$.\bigskip

\textit{Cases (1,0,1) and (0,1,1)}. A\ Jacobian with similar properties can be
obtained when the profiles of strategies are (1,0,1)\ or (0,1,1). The fixed
point with (1,0,1) is $(p_{1,2}^{\#},\theta_{2},p_{3,1}^{\#})$, where
$p_{1,2}^{\#}=\frac{1}{2}[-(\theta_{3}+\theta_{2})+\sqrt{(\theta_{3}%
+\theta_{2})^{2}+4\theta_{3}\theta_{1}}]$ and $p_{3,1}^{\#}=\frac{\theta
_{1}\theta_{3}}{p_{1,2}^{\#}+\theta_{2}}$. Similarly, under (0,1,1), the fixed
point is ($\theta_{1},\check{p}_{2,3},\check{p}_{3,2}$) where $\check{p}%
_{3,1}=\frac{1}{2}[-(\theta_{2}+\theta_{1})+\sqrt{(\theta_{2}+\theta_{1}%
)^{2}+4\theta_{2}\theta_{3}}]$ and $\check{p}_{2,3}=\frac{\theta_{1}\theta
_{2}}{\check{p}_{3,1}+\theta_{1}}$.

Using a similar argument on can verify that the fixed points supported by the
strategies (0,0,0), (0,0,1), and (1,0,0) are also stable.
\end{proof}

\bigskip

\begin{proof}
[Proof of Proposition \ref{closeup}]For $\epsilon$ sufficiently small,
\[
\Vert{\mathbf{p}}(0)-{\mathbf{p}}^{\ast}\Vert\leq\epsilon\qquad
\mathrm{implies}\qquad\Vert{\mathbf{p}}(t)-{\mathbf{p}}^{\ast}\Vert\leq
C\epsilon\qquad\forall t>0\,.
\]
Since ${\mathbf{p}}^{\ast}$ is a steady state, ${\mathbf{V}}_{i}^{\ast}$ must
be a fixed point of \eqref{vvidot}, that is
\begin{equation}
\rho{\mathbf{V}}_{i}^{\ast}-\mathcal{A}^{\ast}{\mathbf{V}}_{i}^{\ast
}-{\mathbf{v}}^{\ast}=0 \label{vvidotstar}%
\end{equation}
where $\mathcal{A}^{\ast}$ and ${\mathbf{v}}^{\ast}$ are defined in \eqref{AA}
and \eqref{vv} with ${\mathbf{p}}={\mathbf{p}}^{\ast}$. Calling $\delta
{\mathbf{V}}_{i}(t)={\mathbf{V}}_{i}(t)-{\mathbf{V}}^{\ast}$, by subtracting
\eqref{vvidotstar} form \eqref{vvidot} we get
\[
\dot{\delta{\mathbf{V}}_{i}}=\rho\delta{\mathbf{V}}_{i}-\mathcal{A}%
(t)\delta{\mathbf{V}}_{i}-\mathbf{w}(t)
\]
where
\[
\mathbf{w}(t)=\left(  \mathcal{A}(t)-\mathcal{A}^{\ast}\right)  {\mathbf{V}%
}_{i}^{\ast}+({\mathbf{v}}(t)-{\mathbf{v}}^{\ast}).
\]
From Lemma \ref{PFP} below, it follows that
\[
\delta{\mathbf{V}}_{i}(t)={\mathbf{V}}_{i}(t)-{\mathbf{V}}^{\ast}=\int
_{\infty}^{t}e^{\rho(t-s)}\Phi(t,s)\mathbf{w}(s)ds.
\]
Using that $\Vert\mathbf{w}(t)\Vert\leq C\epsilon$ we get that $\Vert
{\mathbf{V}}_{i}(t)-{\mathbf{V}}_{i}^{\ast}\Vert=C\epsilon$ uniformly in $t$.
Since $({\mathbf{p}}^{\ast},{\mathbf{s}}^{\ast})$ form a Nash steady state,
${\mathbf{s}}^{\ast}$ and ${\mathbf{V}}^{\ast}$ must satisfy \eqref{consiste}.
It follows that ${\mathbf{V}}_{i}(t)$ and ${\mathbf{s}}^{\ast}$ still satisfy
\eqref{consiste} if $\epsilon$ is small enough. Thus, if $\Vert{\mathbf{p}%
}(0)-{\mathbf{p}}^{\ast}\Vert$ is small enough,${\mathbf{s}}^{\ast}$ is the
best response to itself for all $t>0$. It follows that $({\mathbf{p}%
}(t),{\mathbf{s}}^{\ast})$ is a Nash equilibrium.
\end{proof}

\bigskip

\begin{proof}
[Proof of Proposition \ref{compu}]To prove the proposition, it is useful to
state the stability properties of \eqref{vidot}. Let ${\mathbf{V}}%
_{i}(t,T,{\mathbf{W}})$ be the solution of \eqref{vidot} obtained by setting
${\mathbf{V}}_{i}(T,T,{\mathbf{W}})={\mathbf{W}}$.

\begin{lemma}
\label{PFP} Given a set of strategies ${\mathbf{s}}(t)$ and
$\boldsymbol{\sigma}(t)$, and a pattern ${\mathbf{p}}(t)$, for any
${\mathbf{W}}_{1}$ and ${\mathbf{W}}_{2}$:
\begin{equation}
\Vert{\mathbf{V}}_{i}(t,T,{\mathbf{W}}_{1})-{\mathbf{V}}_{i}(t,T,{\mathbf{W}%
}_{2})\Vert_{\infty}\leq e^{-\rho(T-t)}\Vert{\mathbf{W}}_{1}-{\mathbf{W}}%
_{2}\Vert_{\infty} \label{Vlimq}%
\end{equation}
where $\Vert{\mathbf{U}}_{i}\Vert_{\infty}=\sup_{j}U_{i,j}$, and $t\leq T$.
Thus the value function at time $t$ of agent $a_{i}$ can be computed as
\begin{equation}
{\mathbf{V}}_{i}(t)=\lim_{T\rightarrow\infty}{\mathbf{V}}_{i}(t,T,{\mathbf{W}%
}) \label{Vlim}%
\end{equation}
where the limit does not depend on ${\mathbf{W}}$ and it is reached
exponentially fast.
\end{lemma}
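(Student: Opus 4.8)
The plan is to reduce the contraction estimate \eqref{Vlimq} to an $\ell^\infty$ non-expansiveness property of the fundamental matrix $\Phi(t,T)$ introduced in \eqref{fond}, and then to derive the convergence \eqref{Vlim} as an immediate corollary once the boundedness of the genuine value function is invoked.

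First I would observe that ${\mathbf{V}}_i(t,T,{\mathbf{W}}_1)$ and ${\mathbf{V}}_i(t,T,{\mathbf{W}}_2)$ solve the same inhomogeneous linear system \eqref{vvidot}, since $\mathcal{A}^i(t)$ and ${\mathbf{v}}^i(t)$ are fixed once ${\mathbf{s}}(t)$, $\boldsymbol{\sigma}(t)$ and ${\mathbf{p}}(t)$ are given. Hence their difference $\delta{\mathbf{V}}_i(t)={\mathbf{V}}_i(t,T,{\mathbf{W}}_1)-{\mathbf{V}}_i(t,T,{\mathbf{W}}_2)$ satisfies the homogeneous equation $\dot{\delta{\mathbf{V}}_i}=\rho\,\delta{\mathbf{V}}_i-\mathcal{A}^i(t)\delta{\mathbf{V}}_i$ with terminal datum $\delta{\mathbf{V}}_i(T)={\mathbf{W}}_1-{\mathbf{W}}_2$. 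Pulling out the scalar factor $e^{\rho(t-T)}$ cancels the $\rho$ term, and one checks directly that $e^{\rho(t-T)}\Phi(t,T)$ solves the homogeneous system and reduces to the identity at $t=T$, so that
\[
\delta{\mathbf{V}}_i(t)=e^{\rho(t-T)}\,\Phi(t,T)\,({\mathbf{W}}_1-{\mathbf{W}}_2).
\]
Thus \eqref{Vlimq} follows at once as soon as the induced operator norm satisfies $\Vert\Phi(t,T)\Vert_\infty\le 1$ for $t\le T$, where $\Vert\cdot\Vert_\infty$ is the maximum absolute row sum.

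The crux is therefore this bound, and it is exactly here that the generator structure of $\mathcal{A}^i$ recorded in \eqref{PF} enters. In the proof of Proposition \ref{best} it was already established, from \eqref{AA} and \eqref{PF}, that $\Phi(t,T)_{j,k}\ge 0$. It remains to control the row sums: because \eqref{PF} gives $\sum_l\mathcal{A}^i_{j,l}=0$, the constant vector ${\mathbf{1}}=(1,1,1)$ obeys $\mathcal{A}^i(t){\mathbf{1}}=0$, so $t\mapsto{\mathbf{1}}$ is itself the solution of $\dot{\mathbf{Z}}=-\mathcal{A}^i(t){\mathbf{Z}}$ with terminal value ${\mathbf{1}}$; by uniqueness $\Phi(t,T){\mathbf{1}}={\mathbf{1}}$, i.e. every row of $\Phi(t,T)$ sums to one. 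Nonnegativity of the entries together with unit row sums yields precisely $\Vert\Phi(t,T)\Vert_\infty=1$, and hence \eqref{Vlimq}. (Since $\mathcal{A}^i$ is only piecewise continuous, one reads this for the absolutely continuous Carath\'eodory solution; equivalently, a maximum-principle argument works directly, as \eqref{PF} and $\mathcal{A}^i_{j,k}\ge0$ for $k\ne j$ force the index attaining $\max_j|(\delta{\mathbf{V}}_i)_j|$ to have a derivative of the sign that makes $e^{\rho(T-t)}\Vert\delta{\mathbf{V}}_i(t)\Vert_\infty$ nonincreasing as $t$ decreases from $T$.)

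Finally, for \eqref{Vlim} I would use that the genuine value function ${\mathbf{V}}_i(t)$ of \eqref{Vf} is itself a solution of \eqref{vidot} and is uniformly bounded, because $\rho>0$ and $v_{i,l}$ is bounded (the weights $\pi^i_{l,j}$ summing to one). Writing ${\mathbf{V}}_i(t)={\mathbf{V}}_i(t,T,{\mathbf{V}}_i(T))$ and applying \eqref{Vlimq} with ${\mathbf{W}}_2={\mathbf{V}}_i(T)$ gives
\[
\Vert{\mathbf{V}}_i(t,T,{\mathbf{W}})-{\mathbf{V}}_i(t)\Vert_\infty\le e^{-\rho(T-t)}\,\Vert{\mathbf{W}}-{\mathbf{V}}_i(T)\Vert_\infty,
\]
and since $\Vert{\mathbf{V}}_i(T)\Vert_\infty$ stays bounded uniformly in $T$, the right-hand side tends to $0$ as $T\to\infty$ at the exponential rate $e^{-\rho(T-t)}$, independently of the chosen ${\mathbf{W}}$; this is exactly \eqref{Vlim}. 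I expect the only real obstacle to be the step bounding $\Vert\Phi(t,T)\Vert_\infty$, i.e. recognizing the Markov-generator (Perron--Frobenius) structure of $\mathcal{A}^i$ hidden in \eqref{PF}; the remaining manipulations are routine.
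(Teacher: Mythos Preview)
Your proposal is correct and follows the same overall strategy as the paper: take the difference of the two solutions, peel off the factor $e^{\rho(t-T)}$, and reduce \eqref{Vlimq} to the operator bound $\Vert\Phi(t,T)\Vert_\infty\le 1$, which is where the generator structure \eqref{PF} of $\mathcal{A}^i$ is used. The only noteworthy difference is in how that bound is obtained. The paper argues via an Euler product approximation $\Phi(t,T)=\lim_{N\to\infty}\prod_{k=N}^{1}\bigl(1+\delta t\,\mathcal{A}^i(t_k)\bigr)$ and shows each factor is row-stochastic for $\delta t$ small, hence a contraction in $\Vert\cdot\Vert_\infty$. You instead invoke the nonnegativity of $\Phi(t,T)$ (already recorded in the proof of Proposition~\ref{best}) together with the observation that $\mathcal{A}^i(t)\mathbf{1}=0$ forces $\Phi(t,T)\mathbf{1}=\mathbf{1}$ by uniqueness, so $\Phi(t,T)$ is itself row-stochastic. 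Your route is a bit more direct and sidesteps the discretization limit, at the cost of leaning on the nonnegativity statement made (but not proved in detail) earlier; the paper's product argument is self-contained and in fact supplies a proof of that nonnegativity as a by-product. Your treatment of \eqref{Vlim} via $\mathbf{V}_i(t)=\mathbf{V}_i(t,T,\mathbf{V}_i(T))$ and the uniform boundedness of $\mathbf{V}_i$ is exactly what the paper leaves implicit.
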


\begin{proof}
[Proof of Lemma \ref{PFP}]To obtain an explicit representation of
${\mathbf{V}}_{i}(t,T,{\mathbf{W}})$, we apply the Duhamel principle to
\eqref{vvidot} and use the solution $\Phi(t,T)$ of \eqref{fond}. So doing, we
obtain
\begin{equation}
{\mathbf{V}}_{i}(t,T,{\mathbf{W}})=e^{\rho(t-T)}\Phi(t,T){\mathbf{W}}+\int
_{T}^{t}e^{\rho(t-\tau)}\Phi(t,\tau){\mathbf{v}}(\tau)d\tau, \label{Dua}%
\end{equation}
so that
\[
{\mathbf{V}}_{i}(t,T,{\mathbf{W}}_{1})-{\mathbf{V}}_{i}(t,T,{\mathbf{W}}%
_{2})=e^{\rho(t-T)}\Phi(t,T)({\mathbf{W}}_{1}-{\mathbf{W}}_{2})\ .
\]
This implies
\[
\Vert{\mathbf{V}}_{i}(t,T,{\mathbf{W}}_{1})-{\mathbf{V}}_{i}(t,T,{\mathbf{W}%
}_{2})\Vert_{\infty}=e^{\rho(t-T)}\Vert\Phi(t,T)\Vert_{\infty}\Vert
{\mathbf{W}}_{1}-{\mathbf{W}}_{2}\Vert_{\infty}%
\]
The claim in Lemma \ref{PFP} would hold if%
\begin{equation}
\Vert\Phi(t,T)\Vert_{\infty}\leq1\qquad\forall t\leq T\ . \label{contra}%
\end{equation}
To prove \eqref{contra} observe that
\[
\Phi(t,T)=\lim_{N\rightarrow\infty}\prod_{k=N}^{1}\left(  1+\delta
t\mathcal{A}(t_{k})\right)
\]
where
\[
\delta t=\frac{T-t}{N}\qquad t_{k}=t+k\delta t.
\]
Let $\mathcal{C}(t_{K})=1+\delta t\mathcal{A}(t_{k})$. For $\delta t$ small
enough, $\mathcal{C}_{l,j}(t_{k})>0$ for every $l,j$ and
\[
\sum_{j}\mathcal{C}_{l,j}(t_{k})=1\qquad\forall l\text{.}%
\]
Therefore,
\[
\Vert\mathcal{C}(t_{k}){\mathbf{W}}\Vert_{\infty}=\sup_{l}\left\vert \sum
_{j}\mathcal{C}_{l,j}W_{j}\right\vert \leq\sup_{l}\sup_{j}|W_{j}|\sum
_{j}|\mathcal{C}_{l,j}|=\Vert{\mathbf{W}}\Vert_{\infty}\ .
\]
Thus \ref{contra} is verified because
\[
\Vert\Phi(t,T)\Vert_{\infty}\leq\lim_{N\rightarrow\infty}\prod_{k=N}^{1}%
\Vert\mathcal{C}(t_{k})\Vert_{\infty}\leq1\text{.}%
\]
This completes the proof of Lemma 2.
\end{proof}

Returning to Proposition \ref{compu}, observe that $\mathbf{V}_{i}%
(t)=\mathbf{V}_{i}(t,T,{\mathbf{V}}_{i}(T))$. From Lemma \ref{PFP} we get
\[
\Vert{\mathbf{V}}_{i}(t)-{\mathbf{V}}_{i}(t,T,{\mathbf{V}}_{i}^{\ast}%
)\Vert_{\infty}\leq e^{-\rho(T-t)}\Vert{\mathbf{V}}_{i}(T)-{\mathbf{V}}%
_{i}^{\ast}\Vert_{\infty}.
\]
Finally, to go from the infinity distance $\Vert{\mathbf{V}}_{i}%
(t)-{\mathbf{V}}_{i}(t,T,{\mathbf{V}}_{i}^{\ast})\Vert_{\infty}$ to euclidean
distance $\Vert{\mathbf{V}}_{i}(t)-{\mathbf{V}}_{i}(t,T,{\mathbf{V}}_{i}%
^{\ast})\Vert$ we observe that for every ${\mathbf{W}}$ we have $\Vert
{\mathbf{W}}\Vert_{\infty}\leq\|{\mathbf{W}}\|\leq\sqrt{3}\Vert{\mathbf{W}%
}\Vert_{\infty}$. The expression in Proposition \ref{compu} can thus be
obtained as
\begin{align*}
\|{\mathbf{V}}_{i}(t)-{\mathbf{V}}_{i}(t,T,{\mathbf{V}}_{i}^{\ast})\|\leq &
\sqrt{3}\|{\mathbf{V}}_{i}(t)-{\mathbf{V}}_{i}(t,T,{\mathbf{V}}_{i}^{\ast}
)\|_{\infty}\leq\\
\leq &  \sqrt{3}e^{-\rho(T-t)}\|{\mathbf{V}}_{i}(T)-{\mathbf{V}}_{i}^{\ast}
\|_{\infty}\leq\sqrt{3}e^{-\rho(T-t)}\|{\mathbf{V}}_{i}(T)-{\mathbf{V}}%
_{i}^{\ast}\|\,.
\end{align*}

\end{proof}

\section{Nash Equilibria and the Stable Manifold Theorem}

\label{app:manifold}The iteration procedure to find Nash equilibria is similar
to that used by Perron to prove the stable manifold theorem (for an
illustration see, among others, Robinson, 1995). Here, we discuss similarities
and differences. Consider the system of differential equations
\begin{equation}%
\begin{cases}
\dot{x}=-\lambda x+f^{-}(x,y)\\
\dot{y}=\mu y+f^{+}(x,y)\\
\end{cases}
\label{eq}%
\end{equation}
where $x\in\mathbb{R}^{n}$, $y\in\mathbb{R}^{m}$, $\lambda,\mu>0$ and
\[
\lim_{|x|+|y|\rightarrow0}\frac{|f^{-}(x,y)|+|f^{+}(x,y)|}{|x|+|y|}=0.
\]
The system \eqref{eq} is the sum of linear ($-\lambda x$ and $\mu y$) and
non-linear terms ($f^{+}$ and $f^{-}$); its fixed point is $(x,y)=(0,0)$.

The stable manifold theorem states that for every $x_{0}\in\mathbb{R}^{n}$,
with $|x_{0}|$ sufficiently small, there is unique $y_{0}$ such that the
solution $(x(t),y(t))$ of \eqref{eq} starting at $(x_{0},y_{0})$ satisfies
\begin{equation}
\label{lim}\lim_{t\rightarrow\infty}(x(t),y(t))=(0,0).
\end{equation}
Moreover, it says that the point $y_{0}$ is given by a smooth function of
$x_{0}$, that is $y_{0}=W^{-}(x_{0})$. The graph of $W^{-}$, that is the set
$(x_{0},W^{-}(x_{0}))$, is called the local stable manifold of $(0,0)$.
Perron's proof is based on the following representation of the solution of
\eqref{eq}:
\begin{align}
x(t)  &  =e^{-\lambda t}x_{0}+\int_{0}^{t}e^{-\lambda(t-\tau)}f^{-}%
(x(\tau),y(\tau))d\tau\label{stable}\\
y(t)  &  =\int_{\infty}^{t}e^{\mu(t-\tau)}f^{+}(x(\tau),y(\tau))d\tau.
\label{unstable}%
\end{align}
The proof starts from a guess $x^{0}(t)=x_{0}e^{-\lambda t}$ and $y^{0}(t)=0$
for all $t>0$. It then computes a new approximation for the evolution of the
stable variable $x(t)$, $x^{1}(t)$, through \eqref{stable}. Inserting $x^{1}$
and $y^{0}$ into \eqref{unstable} yields an approximation for the unstable
variable $y^{1}(t)$. Note that while in \eqref{stable} time runs forward
($\tau$ goes from $0$ to $t$), in \eqref{unstable} it runs backward (in $\tau$
goes from $\infty$ to $t$). Therefore both integrations are stable. Iterating
the two steps just described yields a sequence $(x^{n}(t),y^{n}(t))$ that
approximates the solution \eqref{stable}-\eqref{unstable}. Because the
exponential factors in the integrals of \eqref{stable} and \eqref{unstable}
have negative exponents, if $x_{0}$ is sufficiently small, the map from
$(x^{n},y^{n})$ to $(x^{n+1},y^{n+1})$ is a contraction. Finally, the Banach
fixed-point theorem guarantees that the sequence $(x^{n}(t),y^{n}(t))$
converges uniformly to a solution $(x(t),y(t))$ of \eqref{eq} with
$x(0)=x_{0}$ and satisfying \eqref{lim}.

There are similarities between Perron's approach in proving the manifold
theorem and the construction of Nash equilibria discussed in Section 4. First,
the Nash steady state $({\mathbf{p}}^{\ast},{\mathbf{s}}^{\ast})$ corresponds
to the fixed point $(0,0)$ of \eqref{eq}. Second, the ${\mathbf{p}}$ in
\eqref{pi}-\eqref{pizero} is comparable to the $x$ in \eqref{eq}. Third, in
\eqref{Dua} the discount rate $\rho$ plays the same role of the unstable
exponent $\mu$ in \eqref{unstable}. But there are also important differences
because. Fourth the value function ${\mathbf{V}}^{i}$ in \eqref{vidot} is
somewhat comparable to $y$ in \eqref{eq}. But there is also an important
difference: The value function ${\mathbf{V}}^{i}$ affects the evolution of
${\mathbf{p}}$ through the intermediation of the strategies ${\mathbf{s}}$. In
addition, differently from \eqref{eq}, the procedure we presented in Section 4
does not split the evolution of $({\mathbf{p(}}t),{\mathbf{s(}}t))$ near
$({\mathbf{p}}^{\ast},{\mathbf{s}}^{\ast})$ between a linear and a nonlinear
part. Therefore, it is better suited to follow the dynamics away form the
steady state.

Finally, as noted in Section 4, it is important to recognize that the
evolution of the distribution of inventories, ${\mathbf{p}}(t)$, and that of
the value functions ${\mathbf{V}}(t)$, can be studied jointly. Starting from a
${\mathbf{p}}(T)$ close to the steady state ${\mathbf{p}}^{\ast}$, with a
${\mathbf{V}}_{i}(T)={\mathbf{V}}_{i}^{\ast}$ and $\boldsymbol{\sigma}%
_{i}(T)=\boldsymbol{\sigma}_{i}^{\ast}$, one may integrate \eqref{vidot} and
\eqref{pi}-\eqref{pizero} backward in time -- $t$ goes from $T$ to $0$. To
find the approximate solution of the Nash equilibrium it would suffice to
alter $\boldsymbol{\sigma}_{i}(t)$ along the integration process so as to be
consistent with the value functions ${\mathbf{V}}_{i}(t)$ -- i.e. to satisfy
\eqref{consiste}. While this procedure usually works well for low-dimensional
systems (see, for instance, Brunner and Strulik, 2002), it presents
limitations for the type of research question we are after. Our objective is
to obtain a pattern $({\mathbf{s}}(t),{\mathbf{p}}(t))$ that goes through any
initial conditions, that is, through any arbitrary points in the space of the
distribution of inventories, ${\mathbf{p}}(0)$. As the dimension of the
manifold expands, guiding the system toward a particular point on the state
space by integrating \eqref{vidot} and \eqref{pi}-\eqref{pizero} backward in
time is challenging because some regions of the manifold may be hard to reach.
Conversely, the method proposed here offers total control over the initial
condition, an indispensable feature for running macroeconomic experiments.

\newpage%

\begin{table}[tbp]%
\caption{Baseline Parameters}%
\medskip\hspace*{\fill}%
\begin{tabular}
[c]{ccccccccc}\hline\hline
Model & \textbf{Discount} & \textbf{Matching} &
\multicolumn{3}{c}{\textbf{Utility }} & \multicolumn{3}{c}{\textbf{Storage
Costs}}\\\hline
& $\delta$ & $\alpha$ &  & $u_{i}$ & $D_{i}$ & $c_{1}$ & $c_{2}$ & $c_{3}%
$\\\hline
A & 0.03 & 1 &  & 1 & 0.028 & 0.03 & 0.1 & 0.2\\\hline
B & 0.03 & 1 &  & 1 & 0.028 & 0.1 & 0.05 & 0.03\\\hline
\end{tabular}
\hspace*{\fill}

\label{parameters}%
\end{table}%
\bigskip

\bigskip%

\begin{figure}[tbp]%
\caption{Overview of Equilibria}\hspace*{\fill}%
\begin{tabular}
[c]{c}%
\\%
{\includegraphics[
height=3.9186in,
width=5.066in
]%
{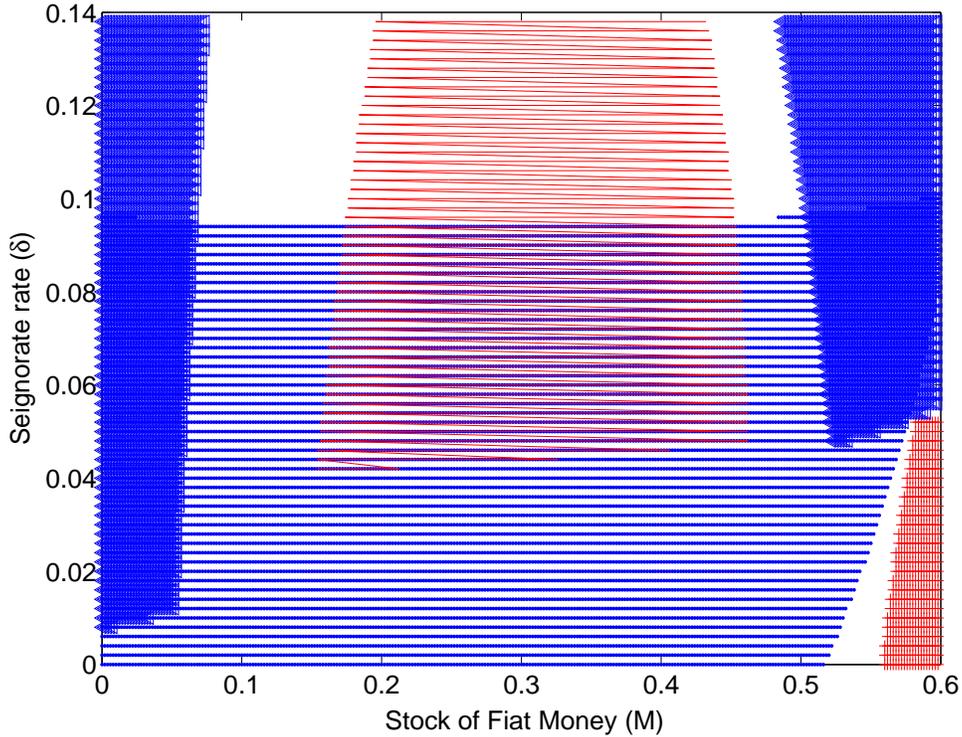}%
}
\end{tabular}
\hspace*{\fill}

{\small - Note. There are four equilibria. Sometimes two equilibria coexist.
The\ dark dotted area in the lower-middle part of the figure represents
}$\mathbf{s}_{3}=${\small (1,1,0) and full acceptance of fiat money. The + in
light color on the lower-right side denotes }$\mathbf{s}_{3}=${\small (0,1,0)
and full acceptance of fiat money. The remaining two types of equilibria are
}$\mathbf{s}_{3}={\small (1,1,0)}${\small (the sign -- in light color, in the
middle-upper part) and }$\mathbf{s}_{3}={\small (0,1,0)}${\small \ (the dark
sign }$<${\small on the top-right and top-left region) with }$s_{1,m}^{2}=0$
{\small and }$s_{j,m}^{i}=1${\small when }$i\neq2${\small and }$j\neq
1${\small . The population is equally divided between the three types
(}$\theta_{i}=\frac{1}{3}${\small ). The remaining parameters values are in
table \ref{parameters}, Model A.}\label{onethird}%
\end{figure}%
%

\begin{figure}[tbp]%
\caption{Acceptance of Commodity and Fiat Money}\hspace*{\fill}%
\begin{tabular}
[c]{c}%
Panel A, Phase Diagram\\%
{\includegraphics[
height=4.1154in,
width=5.3391in
]%
{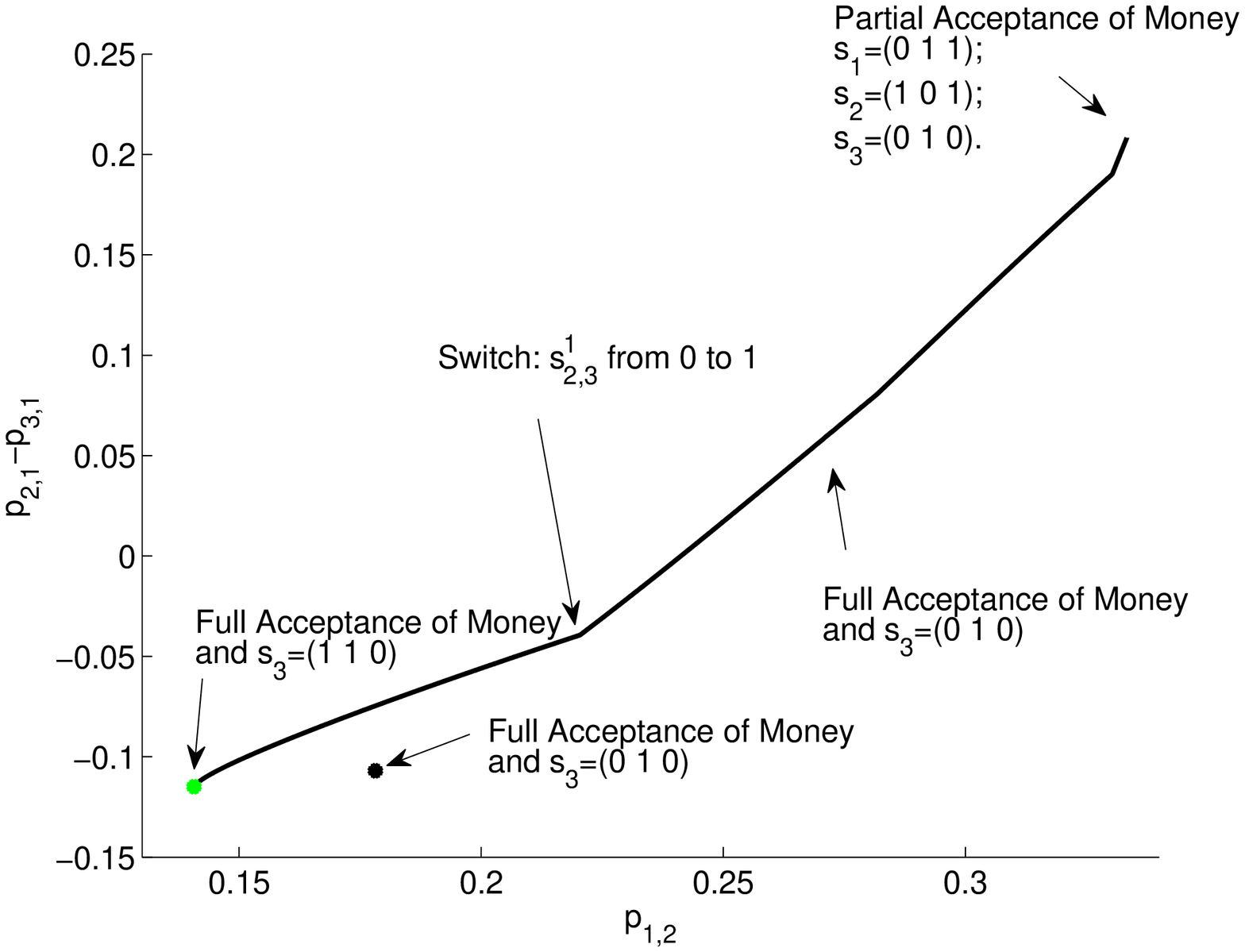}%
}
\\
\\%
\begin{tabular}
[c]{cc}%
Panel B,\ Strategies: $\delta_{m}=0.07$ & Panel C, Strategies: $\delta
_{m}=0.06$\\%
{\includegraphics[
height=2.4682in,
width=3.0685in
]%
{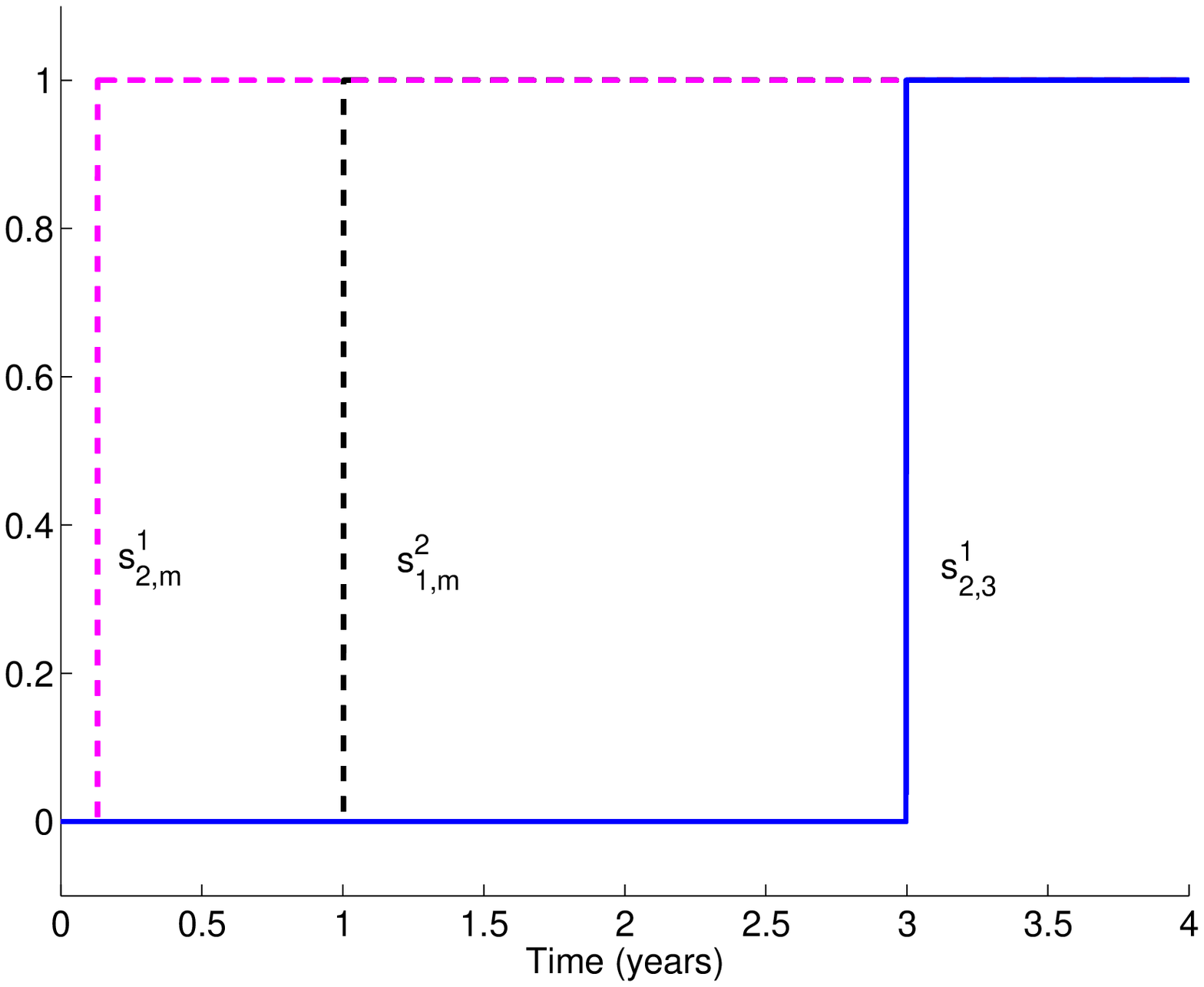}%
}
&
{\includegraphics[
height=2.4657in,
width=3.066in
]%
{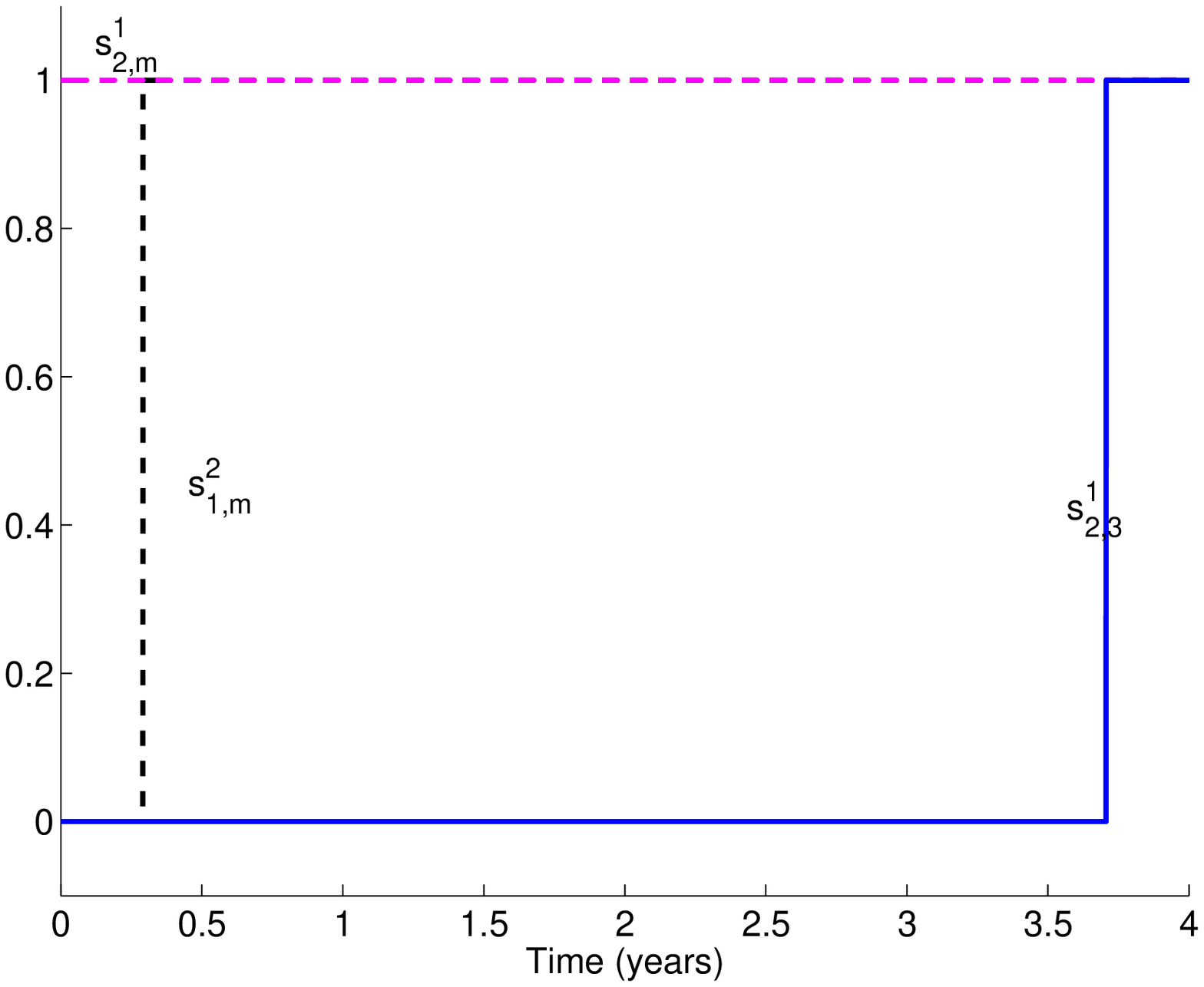}%
}
\end{tabular}
\end{tabular}
\hspace*{\fill}

{\small - Note: Panels A illustrates the convergence to a }$\mathbf{s}_{3}%
=${\small (1,1,0) steady state equilibrium with full acceptance of money. The
population is equally split between the three types (}$\theta_{i}=\frac{1}{3}%
$). {\small The initial condition }$\mathbf{p}(0)=(\theta_{1},0,0,0,\frac
{M}{4})${\small . Panels B\ and C show the switch of }$s_{2,m}^{1}${\small ,
}$s_{1,m}^{2}${\small and of }$s_{2,3}^{1}${\small  from }$0$ {\small to }%
$1${\small . }\label{double_emergence}%
\end{figure}%
%

\begin{figure}[tbp]%
\caption{Reduction of Seignorage}\hspace*{\fill}%
\begin{tabular}
[c]{cc}%
Panel A:\ Phase Diagram & Panel B: Production\\%
{\includegraphics[
height=2.3541in,
width=3.0414in
]%
{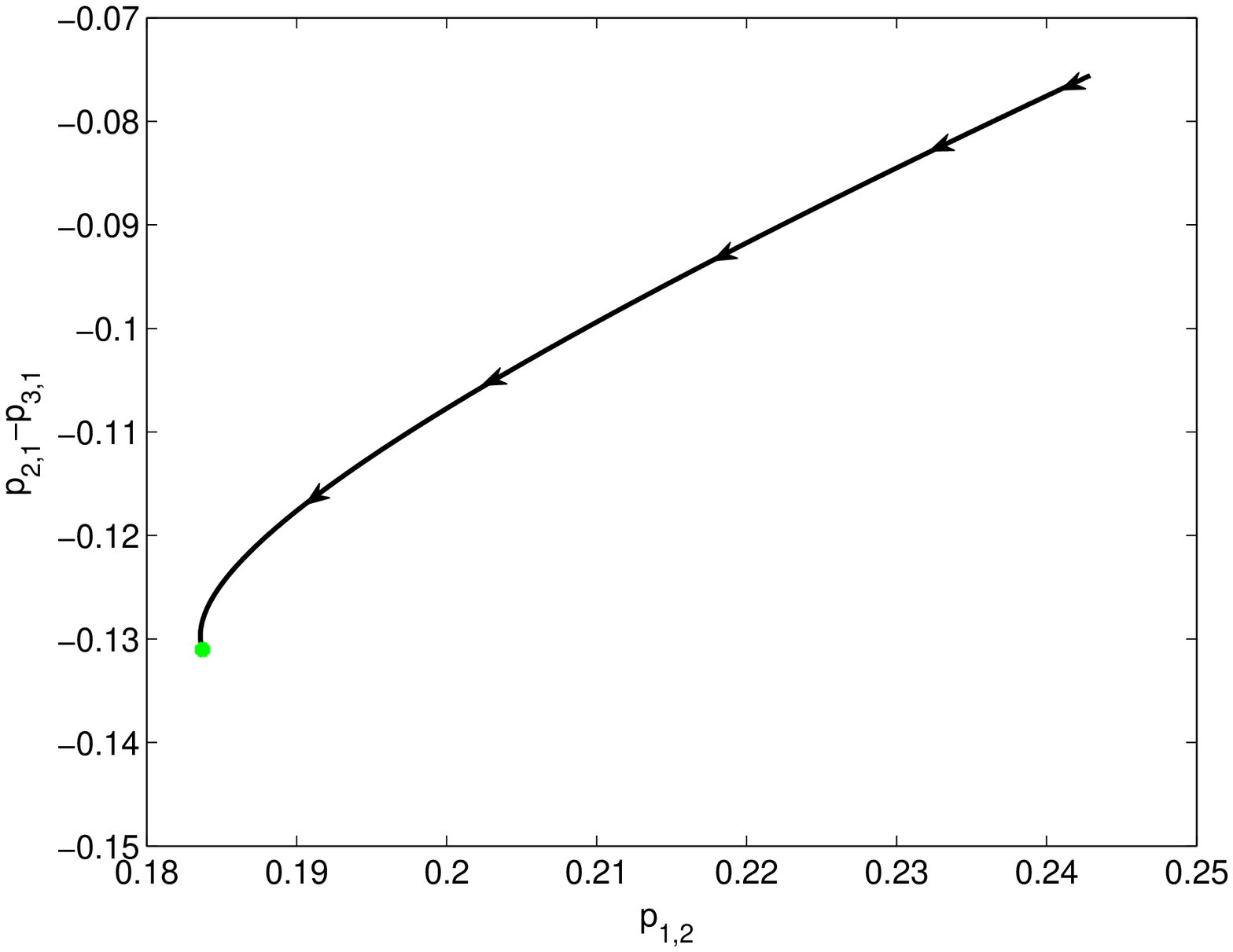}%
}
&
{\includegraphics[
height=2.3025in,
width=2.8947in
]%
{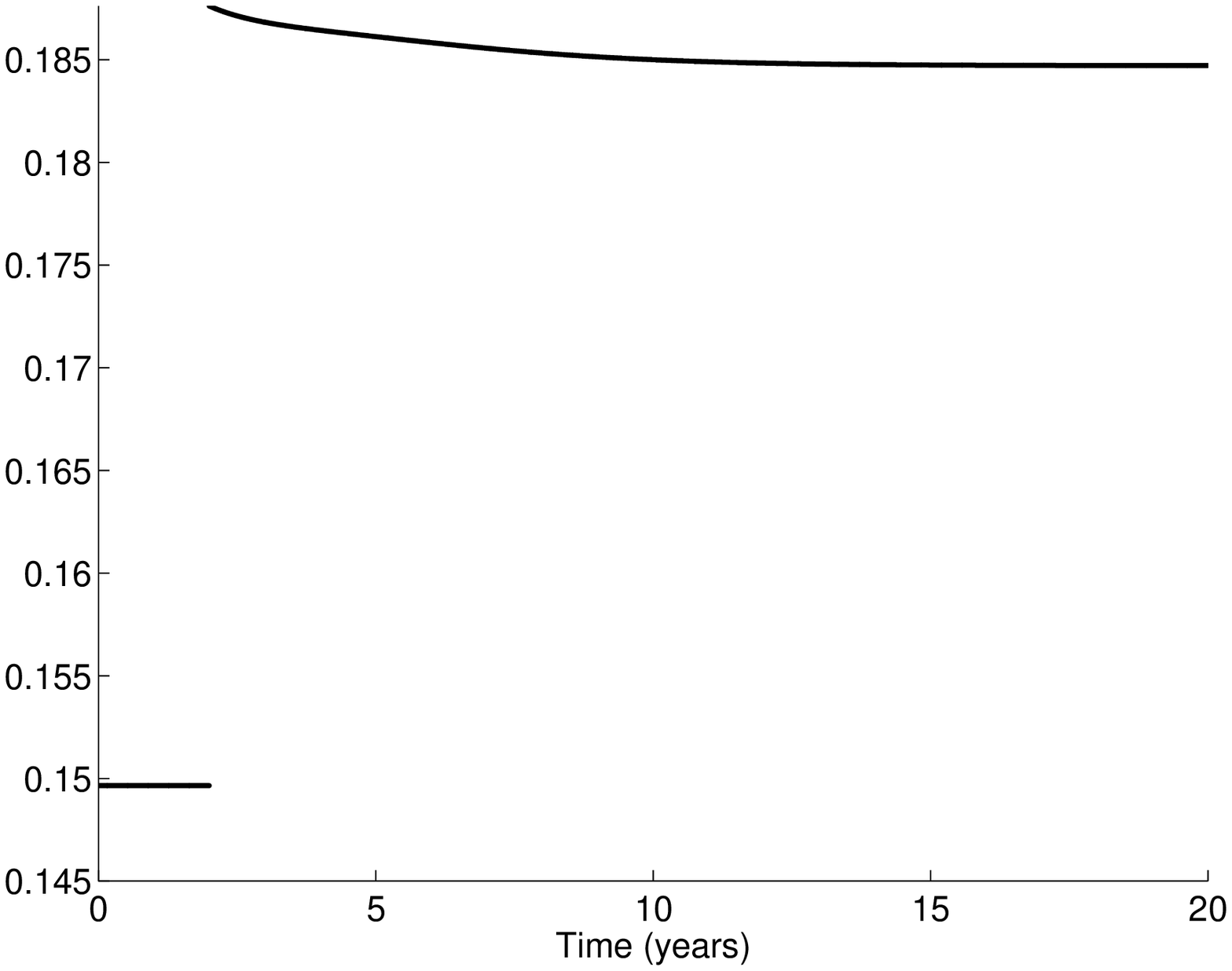}%
}
\end{tabular}
\hspace*{\fill}

{\small - Note: The seignorage rate, }$\delta_{m}${\small , goes from }%
$0.1${\small  to  }$0.02${\small . The economy transits from a steady state
equilibrium in which }$\mathbf{s}_{1}=${\small (1,1,1), }$\mathbf{s}_{2}%
=${\small (0,1,1), and }$\mathbf{s}_{3}=${\small (0,1,0), to new equilibrium
in which }$\mathbf{s}_{1}=\mathbf{s}_{2}=${\small (1,1,1), and }%
$\mathbf{s}_{3}=${\small (1,1,0). The stock of fiat money is }$M${\small =0.3
and }$\theta_{i}=\frac{1}{3}${\small . The remaining parameters are in table
\ref{parameters}, Model A. }\label{reduction_seignorage}%
\end{figure}%
%

\begin{figure}[tbp]%
\caption{Overview Steady State Equilibria, Model A}\hspace*{\fill}%
\begin{tabular}
[c]{c}%
Panel A: $\delta_{m}=0.02$\\%
{\includegraphics[
height=3.7738in,
width=4.6077in
]%
{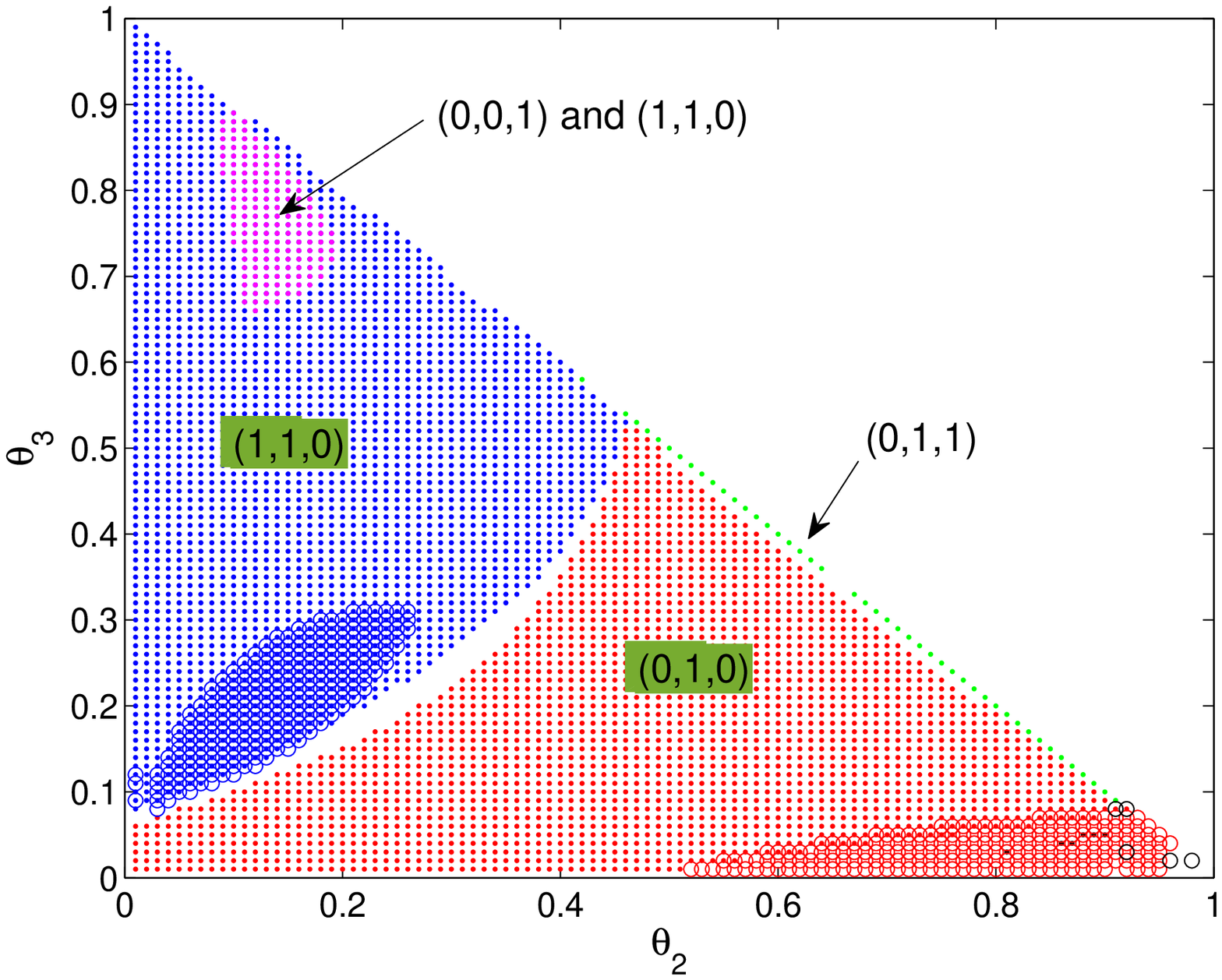}%
}
\\
Panel B: $\delta_{m}=0.10$\\%
{\includegraphics[
height=3.4629in,
width=4.3504in
]%
{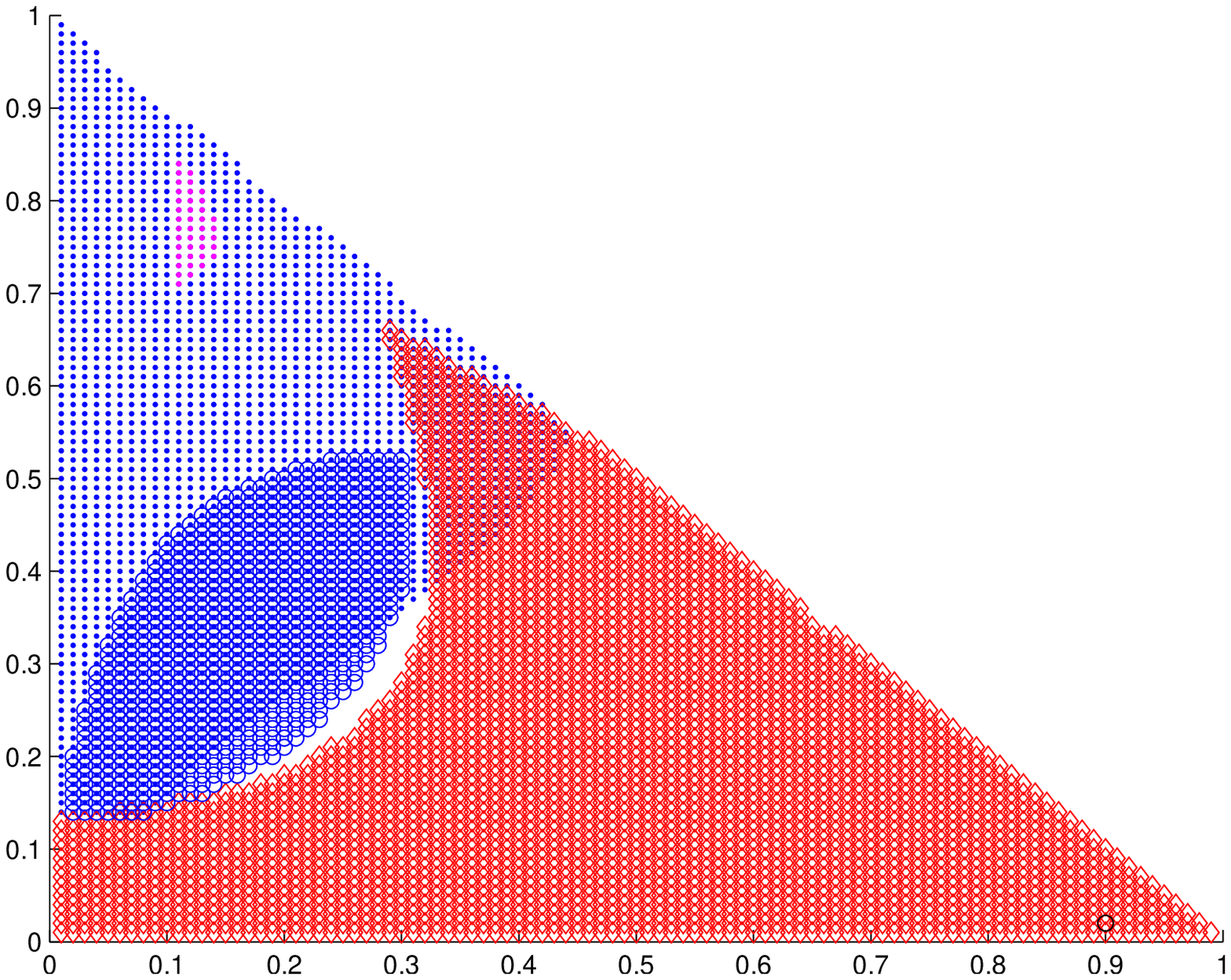}%
}
\end{tabular}
\hspace*{\fill}

{\small - Note:\ On equilibria represented by a dot, }$s_{j,m}^{i}=1${\small
and :\ }$\mathbf{s}_{3}={\small (0,1,0)}${\small (red dot); }$\mathbf{s}%
_{3}={\small (1,1,0)}${\small (blue dot)}; $\mathbf{s}_{3}={\small (0,0,1)}$
{\small (magenta dot); }$\mathbf{s}_{3}={\small (0,1,1)}$ {\small (green
dot}){\small . On equilibria represented by a circle, }$s_{1,m}^{2}%
=0${\small \ and: }$\mathbf{s}_{3}={\small (0,1,0)}${\small (red
circle}){\small ; }$\mathbf{s}_{3}={\small (1,1,0)}${\small (blue
circle}){\small , and }$\mathbf{s}_{3}={\small (1,1,0)}${\small (black
circle}){\small . The triplets in parenthesis in plot A denote }%
$\mathbf{s}_{3}$. {\small M=0.3 in both plots. For the remaining parameters
values see table \ref{parameters}, Model A.}\label{monetary_equilibria}%
\end{figure}%
\bigskip%

\begin{figure}[tbp]%
\caption{Multiple Equilibria}\hspace*{\fill}%
\begin{tabular}
[c]{c}%
{\includegraphics[
height=3.9887in,
width=5.2498in
]%
{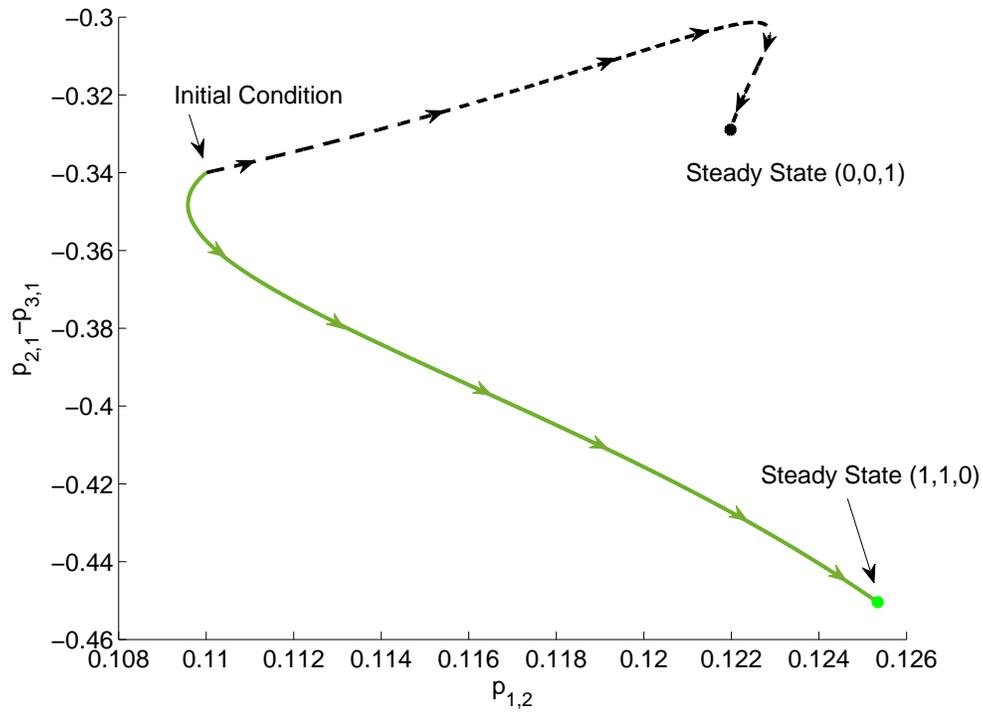}%
}
\end{tabular}
\hspace*{\fill}

{\small - Note: The distribution of the population is as follows:\ }%
$\theta_{1}=0.24${\small , }$\theta_{2}=0.16${\small , and }$\theta_{3}=0.7$.
{\small The rate of seignorage and the stock of fiat money is 0.05 and 0.2,
respectively. The storage costs are: }$c_{1}=0.03${\small , }$c_{2}%
=0.1${\small , and }$c_{3}=0.2${\small . The initial condition is }%
$\mathbf{p}(0)=(0.11,0.06,0.38,0.02,0.06)$. \label{multiple_equilibria}%
\end{figure}%

\bigskip%
\begin{figure}[tbp]%
\caption{Overview Steady State Equilibria, Model B}\hspace*{\fill}%
\begin{tabular}
[c]{c}%
Panel A: $\delta_{m}=0.02$\\%
{\includegraphics[
height=3.4952in,
width=4.3181in
]%
{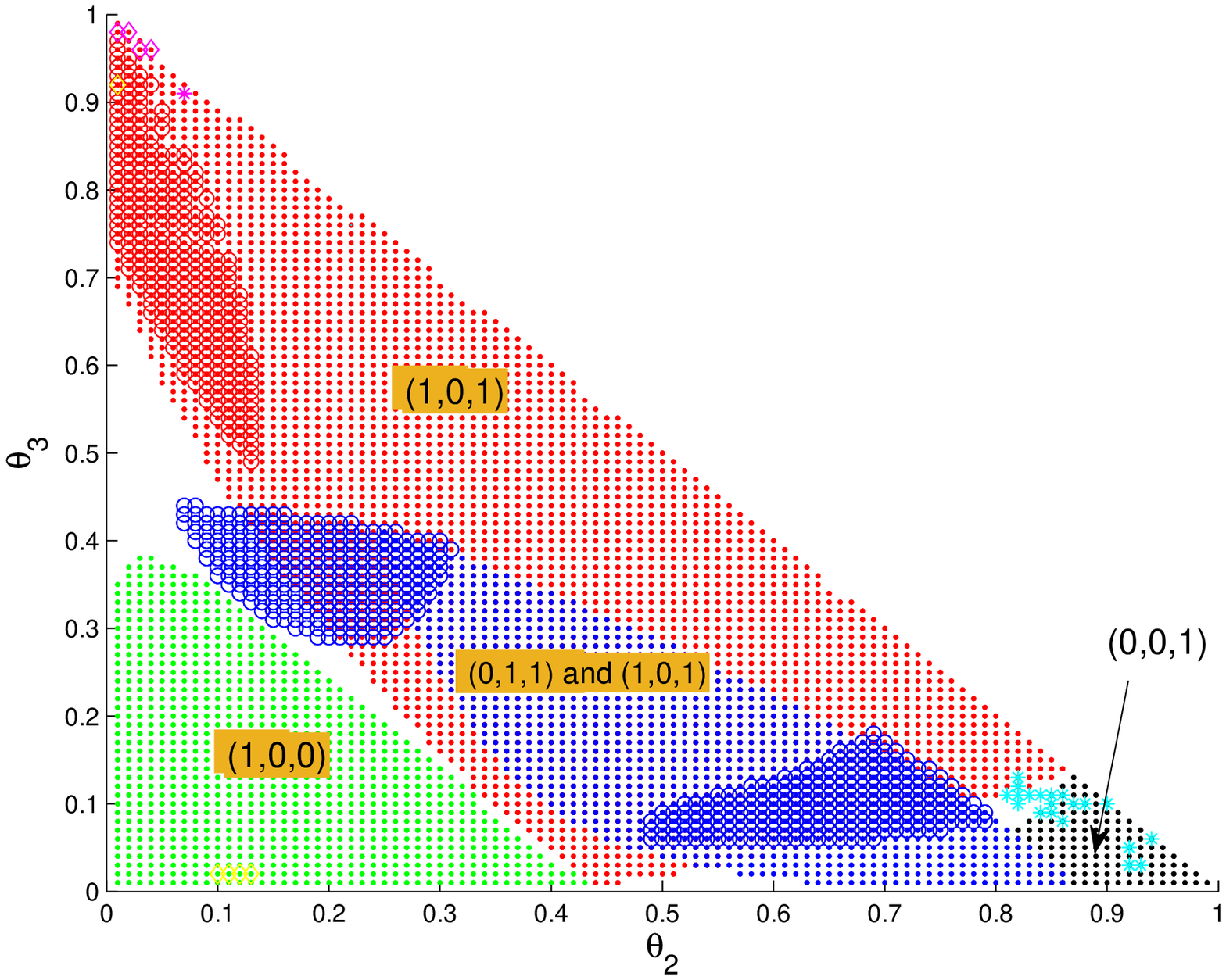}%
}
\\
Panel B: $\delta_{m}=0.1$\\%
{\includegraphics[
height=3.3569in,
width=4.4168in
]%
{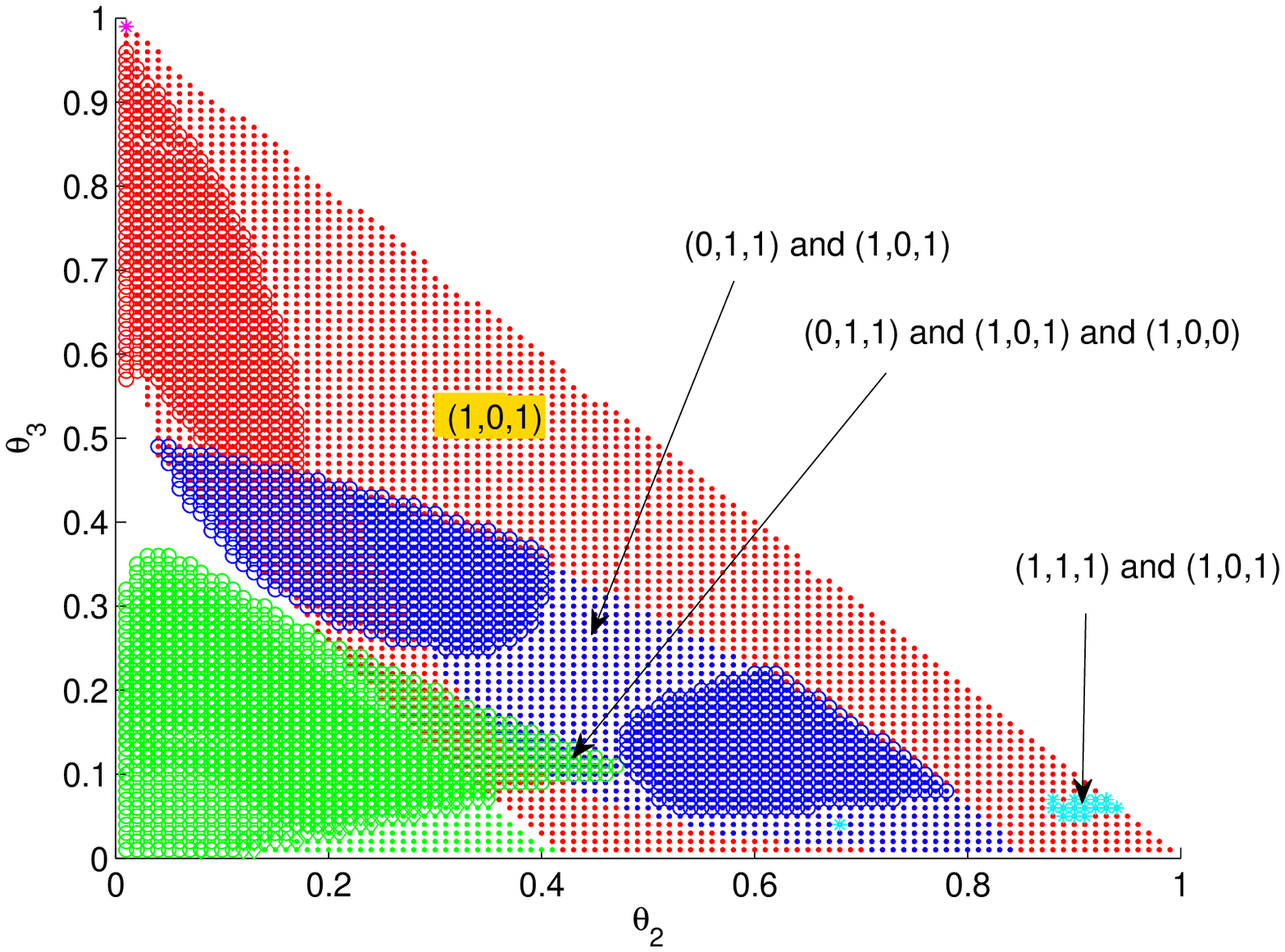}%
}
\end{tabular}
\hspace*{\fill}

{\small - Note: On equilibria represented by a dot, }$s_{j,m}^{i}=1${\small
and: }$\mathbf{s}_{3}={\small (1,0,1)}${\small (red dot); }$\mathbf{s}%
_{3}={\small (0,1,0)}${\small (blue dot); }$\mathbf{s}_{3}={\small (1,0,0)}%
${\small (green dot); and} $\mathbf{s}_{3}={\small (0,0,1)}$ {\small (black
dot). On equilibria represented by a circle, }$s_{1,m}^{2}=0${\small \ and:
}$\mathbf{s}_{3}={\small (1,0,1)}${\small (red circle}){\small ; }%
$\mathbf{s}_{3}={\small (0,1,0)}${\small (blue circle}){\small , }%
$\mathbf{s}_{3}={\small (1,0,0)}${\small (green circle}); {\small and
}$\mathbf{s}_{3}={\small (0,0,1)}$ {\small (black circle}){\small . The
triplets in parenthesis in the two plots denote }$\mathbf{s}_{3}${\small .
M=0.3 on both plots. For other parameters values see table \ref{parameters},
Model B. }\label{overview_equilibria_B}%
\end{figure}%
%

\begin{figure}[tbp]%
\caption{Acceptance of Commodity and Fiat Money, Model B}\hspace*{\fill}%
\begin{tabular}
[c]{c}%
\raisebox{-0pt}{\includegraphics[
height=3.8642in,
width=5.2765in
]%
{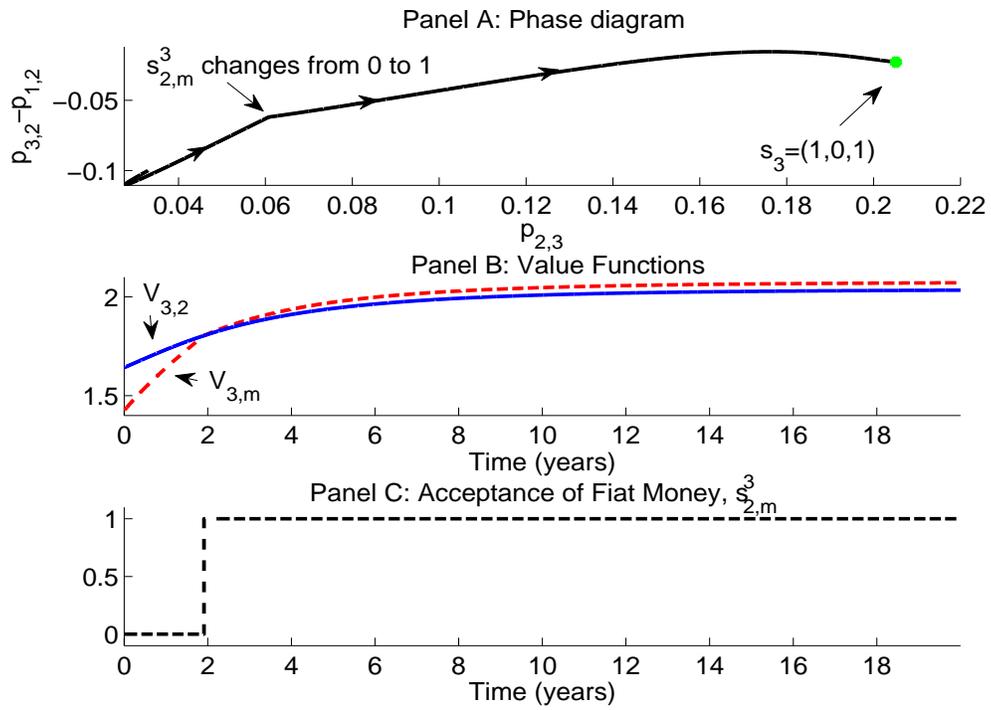}%
}
\end{tabular}
\hspace*{\fill}

{\small -- Note:\ The parameters' values are:\ }$\delta_{m}=${\small 0.1, }%
$M${\small =0.3, and }$\theta_{i}=\frac{1}{3}${\small . See table
(\ref{parameters}), Model B, for remaining parameters' values. The initial
condition is }$\mathbf{p}(0)=(\frac{1}{2}\theta_{1},\frac{1}{10}\theta
_{2},\theta_{3},M,0)$.\label{emergencemodelB}%
\end{figure}%
%

\begin{figure}[tbp]%
\caption{Welfare, Fiat Money, and Seignorage}\hspace*{\fill}%
\begin{tabular}
[c]{c}%
Panel A: Fiat Money\\%
{\includegraphics[
height=3.6179in,
width=4.3624in
]%
{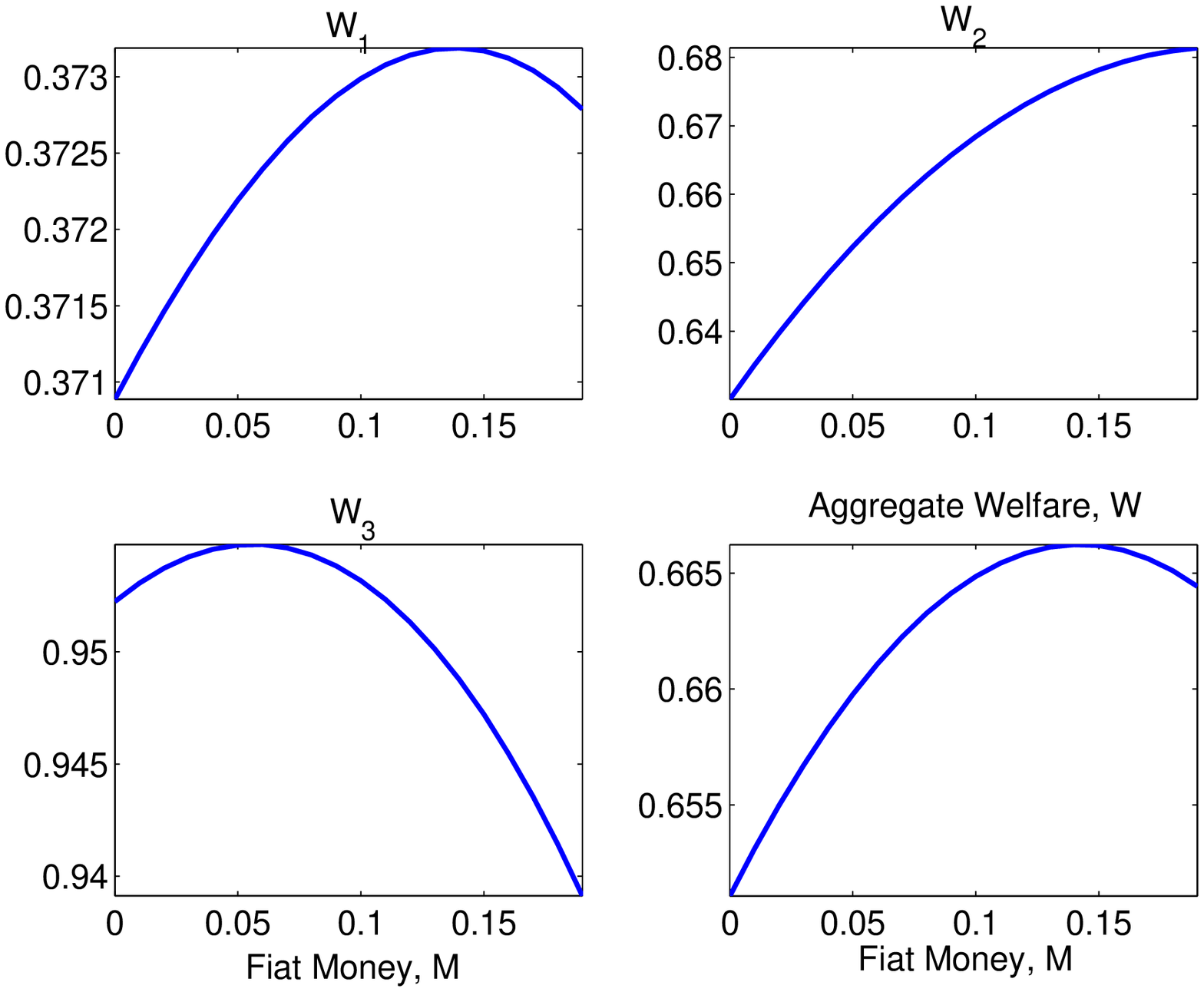}%
}
\\
Panel B: Seigniorage\\%
{\includegraphics[
height=3.8744in,
width=4.6815in
]%
{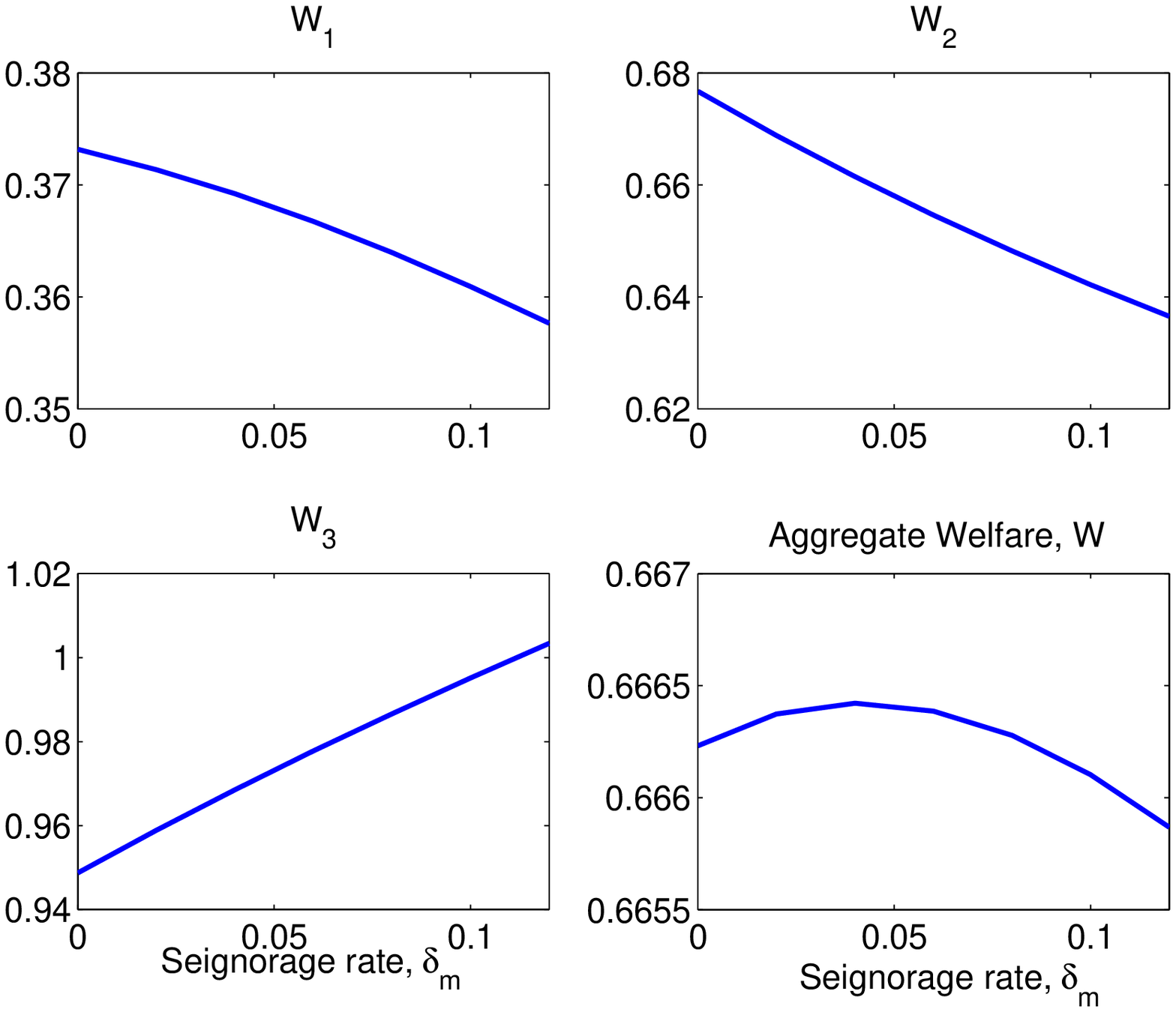}%
}
\end{tabular}
\hspace*{\fill}

{\small Note:\ The value of }$\delta_{m}$ {\small and }$M$ {\small that
maximizes the society's welfare is }$0.04$ {\small and }$0.14${\small ,
respectively. The population is equally split between the three types
(}$\theta_{i}=\frac{1}{3}$){\small . For remaining parameters see table
(\ref{parameters}), Model A.}

\label{welfareM}%
\end{figure}%

\bigskip\bigskip

\end{document}